\newcommand*{\B}{\mathcal{B}}
\newcommand*{\tmix}{t_{\mathrm{mix}}}
\newcommand*{\tv}{{\mathrm{TV}}}
\newcommand*{\matroid}{\mathcal{M}}
\newcommand*{\OPT}{\text{OPT}}
\DeclareMathOperator{\argmax}{argmax}
\tikzset{sptrees/.pic={
	\begin{scope}[every node/.style={circle, fill=Black, inner sep=3}]
		\node (A) at (0, 0) {};
		\node (B) at (30:1) {};
		\node (C) at (90:1.2) {};
		\node (D) at (150:0.7) {};
		\node (E) at (230:1.5) {};
		\node (F) at (300:1) {};
		\node (G) at (350:1.2) {};
	\end{scope}
	\foreach \u/\v in {A/B, B/C, C/D, D/E, E/F, F/G, G/A, A/E, A/D, A/C, E/G}
		\draw[line width=1, Gray] (\u) -- (\v);
	\foreach \u/\v in {#1}
		\draw[line width=2.5, Orange] (\u) -- (\v);
	
}}
\title{Log-Concave Polynomials IV: Approximate Exchange, Tight Mixing Times, and Near-Optimal Sampling of Forests}
\author[1]{Nima Anari}
\affil[1]{\small Stanford University, \textsf{anari@cs.stanford.edu}, \textsf{tdvuong@stanford.edu}}
\author{Kuikui Liu}
\author{Shayan Oveis Gharan}
\affil{\small University of Washington, \textsf{liukui17@cs.washington.edu}, \textsf{shayan@cs.washington.edu}}
\author{Cynthia Vinzant}
\affil{\small North Carolina State University, \textsf{clvinzan@ncsu.edu}}
\author[1]{Thuy-Duong Vuong}
\begin{document}
	\maketitle
	
	\begin{abstract}
		We prove tight mixing time bounds for natural random walks on bases of matroids, determinantal distributions, and more generally distributions associated with log-concave polynomials. For a matroid of rank $k$ on a ground set of $n$ elements, or more generally distributions associated with log-concave polynomials of homogeneous degree $k$ on $n$ variables, we show that the down-up random walk, started from an arbitrary point in the support, mixes in time $O(k\log k)$. Our bound has no dependence on $n$ or the starting point, unlike the previous analyses \cite{ALOV19,CGM19}, and is tight up to constant factors. The main new ingredient is a property we call approximate exchange, a generalization of well-studied exchange properties for matroids and valuated matroids, which may be of independent interest. In particular, given function $\mu: \binom{[n]}{k} \to \R_{\geq 0},$ our approximate exchange property implies that a simple local search algorithm gives a $k^{O(k)}$-approximation of $\max_{S} \mu(S)$ when $\mu$ is generated by a log-concave polynomial, and that greedy gives the same approximation ratio when $\mu$ is strongly Rayleigh. 
		
		As an application, we show how to leverage down-up random walks to approximately sample random forests or random spanning trees in a graph with $n$ edges in time $O(n\log^2 n).$ The best known result for sampling random forest was a FPAUS with high polynomial runtime recently found by \cite{ALOV19, CGM19}. For spanning tree, we improve on the almost-linear time algorithm by \textcite{Sch18}. Our analysis works on weighted graphs too, and is the first to achieve nearly-linear running time for these problems. Our algorithms can be naturally extended to support approximately sampling from random forests of size between $k_1$ and $k_2$ in time $O(n \log^2 n)$, for fixed parameters $k_1, k_2,$ as well as approximate sampling random independent set of matroid $M$ of rank $k$ on a ground set of $n$ elements using $O(kn \log k)$ calls to the independence oracle of $M$.
	\end{abstract}
	
	\section{Introduction}
\label{sec:intro}

Let $\mu:\binom{[n]}{k}\to\R_{\geq 0}$ be a density function on size $k$ subsets of $[n]=\set{1,\dots,n}$, defining a distribution $\Pr{S}\propto \mu(S)$. The generating polynomial of $\mu$ is the multivariate $k$-homogeneous polynomial defined as follows:
\[ g_\mu(z_1,\dots,z_n)=\sum_{S\in \binom{[n]}{k}} \mu(S)\prod_{i\in S} z_i. \]
We say that $g_\mu$ is log-concave if $\log(g_\mu)$ is a concave function over $\R_{\geq 0}^n$. The study of log-concave polynomials has recently enabled breakthroughs on old conjectures about matroids, including the resolution of a conjecture of \textcite{MV89} on the expansion of the bases-exchange graphs \cite{ALOV19}, and Mason's ultra-log-concavity conjecture \cite{ALOV18, BH18}. These results rely on the log-concavity of the generating polynomial for various distributions associated with matroids, most importantly the uniform distribution on the set of bases \cite{AOV18}.

Besides distributions associated with matroids, several other classes of distributions possess a log-concave generating polynomial. An important subclass consists of strongly Rayleigh distributions \cite{BBL09} which includes determinantal point processes, distributions that have found numerous applications in machine learning \cite[see][for a survey]{KT12}. A well-studied example belonging to all classes mentioned so far consists of the uniform distribution over spanning trees of a graph $G=(V, E)$. Here $n$ is the number of edges $\card{E}$ in the graph and $k$ is the number of edges in a spanning tree, i.e., $\card{V}-1$. Spanning trees of a graph form bases of a matroid called the graphic matroid \cite[see, e.g.,][]{Oxl06} and they can also be viewed as a determinantal point process because of the matrix-tree theorem \cite[see, e.g.,][]{BBL09}, and are consequently strongly Rayleigh.

The motivation behind the conjecture of \textcite{MV89} was to solve the problem of approximately sampling from bases of a matroid. After this conjecture was made, efficient sampling algorithms were developed for various special classes of matroids \cite{FM92, Gam99, JS02, JSTV04, Jer06, Clo10, CTY15, GJ18} until \textcite{ALOV19} showed an efficient approximate sampling algorithm for all matroids. This algorithm used a variant of random walks on the so-called ``bases-exchange'' graphs of matroids, that is known as the ``down-up'' random walk studied in the context of high-dimensional expanders \cite{KM16, DK17, KO20}. For a distribution defined by $\mu:\binom{[n]}{k}\to \R_{\geq 0}$, the down-up random walk $P$ starts from a set $S_0\in \binom{[n]}{k}$ and produces the Markovian sequence $S_0, S_1,S_2,\dots$ as follows:
\begin{algorithm*}
	\For{$t=0,1,2,\dots$}{
		Let $T_t\in \binom{S_t}{k-1}$ be a subset of $S_t$ obtained by dropping one element of $S_t$ uniformly at random.\;
		Let $S_{t+1}=T_t\cup\set{e}$, where the element $e$ is chosen with probability $\propto \mu(T_t\cup\set{e})$.\;
	}
\end{algorithm*}
The random walk $P$ has $\mu$ as its stationary distribution and can be efficiently implemented by probing $\mu$ on at most $n$ different sets each time. Thus assuming oracle access to $\mu$, or in the case of matroids, an independence oracle for the matroid, each step of $P$ takes $O(n)$ time. The challenging part has been establishing the mixing time of $P$, i.e., bounds on the time $t$ such that the distribution of $S_t$ is $\epsilon$-close in total variation distance to the one defined by $\mu$:
\[ \tmix(P, S_0, \epsilon):=\min\set*{t\given \norm{P^t(S_0, \cdot)-\mu(\cdot)}_\tv\leq \epsilon}. \]

\Textcite{ALOV19} proved that when $\mu$ has a log-concave generating polynomial, the spectral gap of the random walk $P$ is at least $1/k$. This implied that
\[ \tmix(P, S_0, \epsilon)\leq O\parens*{k\cdot \parens*{\log\frac{1}{\PrX{\mu}{S_0}}+\log\frac{1}{\epsilon}}}. \]
Later, \textcite{CGM19} proved a Modified Log-Sobolev Inequality (MLSI) for the same random walk which resulted in a tighter mixing time:
\[ \tmix(P, S_0, \epsilon)\leq O\parens*{k\cdot \parens*{\log\log\frac{1}{\PrX{\mu}{S_0}}+\log\frac{1}{\epsilon}}}. \]
These results lead to efficient algorithms assuming that the mass of the starting set, $\P_{\mu}{S_0}$, is not terribly small; this can often be achieved in practice. For example, for matroids, any starting basis $S_0$ will satisfy $\P_\mu{S_0}\geq 1/\binom{n}{k}\geq n^{-k}$, because the number of bases is at most $\binom{n}{k}$. Consequently the above bounds turn into $\tmix(P, S_0, \epsilon)\leq O(k(k\log(n)+\log(1/\epsilon)))$ and $\tmix(P, S_0,\epsilon)\leq O(k(\log k+\log \log n+\log (1/\epsilon)))$ respectively. However, for other distributions $\mu$ with a log-concave generating polynomial, even in the very special case of determinantal point processes, there is no control on $\min\set{\P_{\mu}{S_0}\given S_0\in \supp(\mu)}$, so one has to rely on clever tricks to find a good starting set $S_0$; even then, the best hope is to find a set $S_0$ with $\P_\mu{S_0}\gtrsim 1/\binom{n}{k}$, which results in a mixing time mildly depending on $n$.

Historically, earlier works on a subclass of matroids, called balanced matroids, followed a similar development, where initially a spectral gap result was proved, resulting in a running time\footnote{Note that the running time is $n$ times the mixing time for the down-up walk.} of $O(nk(k\log n+\log(1/\epsilon)))$ followed by MLSI which resulted in a mixing time of $O(k(\log k+\log \log n+\log (1/\epsilon)))$ \cite[see][for a survey]{MT06}. Noting that the term $\log \log n$ seems unnecessary, \textcite{MT06} raised the question of proving a better inequality that would result in a running time of $O(nk\log(k/\epsilon))$. They specifically hoped for the possibility of proving a Nash inequality, an advanced type of functional inequality used to derive very tight mixing times for some Markov chains \cite{MT06}. We believe there are barriers to using functional inequalities in general to prove $O(k\log (k/\epsilon))$ mixing time for the down-up random walk; we defer an explanation of this to a future version of this paper. However, without proving new functional inequalities, we manage to sidestep this barrier and improve the running time to the conjectured $O(nk\log (k/\epsilon))$ for not just balanced matroids, but the class of all matroids.


Our first result is a tight analysis of the mixing time, entirely removing the dependence on $\P_{\mu}{S_0}$ and $n$. 
\begin{theorem}\label{thm:main}
	For any distribution defined by $\mu:\binom{[n]}{k}\to\R_{\geq 0}$ with a log-concave generating polynomial $g_\mu$, the mixing time of the down-up random walk $P$, starting from any $S_0$ in the support of $\mu$ is
	\[ \tmix(P, S_0, \epsilon)\leq O(k\log(k/\epsilon)). \]
\end{theorem}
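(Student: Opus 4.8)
The plan is to bound $\max_{S_0 \in \supp(\mu)} \norm{P^t(S_0,\cdot) - \mu}_\tv$ directly by a coupling argument, rather than through a functional inequality for the whole chain. I would first record the structural fact that, since $g_\mu$ is log-concave, $\supp(\mu)$ is the set of bases of a matroid (so in particular any two sets in the support are joined by a sequence of single-element exchanges) \cite{BH18}; consequently the Hamming-type distance $d(S,S') = \card{S \setminus S'}$ is a genuine metric on $\supp(\mu)$ with diameter at most $k$. The target running time then follows from a standard template: if I can exhibit a Markovian coupling of the down-up walk started from two arbitrary bases whose distance $D_t$ contracts in expectation by a factor $1 - \Omega(1/k)$ per step, then $\Ex{D_t} \le k\,e^{-\Omega(t/k)}$, and taking one of the two chains stationary gives $\norm{P^t(S_0,\cdot)-\mu}_\tv \le \Pr{D_t \ge 1} \le \Ex{D_t} \le \epsilon$ once $t = O(k\log(k/\epsilon))$.

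The main new ingredient, which I would establish first, is an \emph{approximate exchange} property of the coefficient sequence $\mu$: for any $S,T \in \supp(\mu)$ and any $a \in S \setminus T$, there is a (possibly randomized) choice of $b \in T \setminus S$ so that the two-element swap loses at most a bounded multiplicative factor, $\mu(S - a + b)\,\mu(T - b + a) \gtrsim \mu(S)\,\mu(T)$, and in fact a distribution over such $b$ under which this holds losslessly on average. For ordinary matroids this is exact exchange and for valuated matroids it is the classical valuated exchange axiom; the content here is that an approximate version survives for the coefficients of an arbitrary log-concave polynomial. I would prove it by reducing to a log-concavity statement for the coefficient sequence of a polynomial in two variables $z_a, z_b$, obtained from $g_\mu$ by differentiating in the coordinates of $S \cap T$ and evaluating the remaining coordinates at $0$ (decomposing a longer exchange into single swaps along an exchange path where necessary); a log-concave homogeneous polynomial in two variables has a genuinely log-concave coefficient sequence after rescaling by binomials, which is exactly what feeds the swap inequality.

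With approximate exchange in hand, I would run the down-up walk $(S_t)$ from $S_0$ against a stationary copy $(S_t^\ast) \sim \mu$ and couple step by step. For the "down" half, couple the uniformly removed elements via the identity on $S_t \cap S_t^\ast$ together with a fixed bijection between $S_t \setminus S_t^\ast$ and $S_t^\ast \setminus S_t$; this removes a common element simultaneously with probability $(k - D_t)/k$ (leaving the distance at $D_t$) and otherwise removes a matched pair of differing elements (dropping the intermediate distance to $D_t - 1$). The delicate half is the "up" move: one must couple the two re-addition distributions, each proportional to $\mu$ on the admissible supersets, so that the Hamming distance does not increase --- and this is precisely where approximate exchange is used, to realign the two distributions when one chain wants to re-add an element the other already holds. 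Granting a distance non-increasing up-coupling, one gets $\Ex{D_{t+1} \mid \mathcal{F}_t} \le \tfrac{k - D_t}{k} D_t + \tfrac{D_t}{k}(D_t - 1) = (1 - 1/k) D_t$, the required contraction.

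I expect the main obstacle to be exactly the up-move coupling: for matroids and valuated matroids the re-addition distributions can be matched using exact exchange, but for general log-concave $\mu$ the naive matching can push the distance up, and one needs the averaged ("fractional") form of approximate exchange — not merely a single worst-case swap — so that the per-step coupling error is genuinely absorbed into the $1 - \Omega(1/k)$ factor at every distance rather than accumulating. Making the loss in the approximate exchange inequality small enough for this, and verifying that the resulting coupling contracts at all $0 < D_t \le k$, is where the real work lies; the two-variable log-concavity reduction is the natural route to the inequality itself, but quantifying the constant and wiring it into the coupling is the crux. (As a fallback, approximate exchange can instead be combined with the $1/k$ spectral gap of \cite{ALOV19} or the MLSI of \cite{CGM19} in a burn-in/warm-start argument — bound the divergence to $\mu$ by $\poly(k)$ after $O(k\log k)$ steps, then decay for $O(k\log(k/\epsilon))$ more — but the coupling route is cleaner and self-contained.)
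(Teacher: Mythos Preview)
Your primary route---a contractive path coupling with the up-step engineered from approximate exchange---has a genuine gap at exactly the point you flag. The only approximate-exchange constant one can extract from log-concavity is $2^{O(k)}$ (the paper proves this and conjectures $k^{O(1)}$ as the truth); already at $k=2$ the Hessian computation gives $\sqrt{A}\le\sqrt{B}+\sqrt{C}$, hence only $\max(B,C)\ge A/4$, and no averaging over the choice of $b$ does better than a constant factor. That constant would enter the up-step coupling \emph{multiplicatively} at every step, so it cannot be absorbed into a $(1-\Omega(1/k))$ contraction: to keep the Hamming distance nonincreasing you would need an essentially lossless fractional exchange, and nothing of that strength is known for general log-concave $\mu$. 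Separately, your reduction for the exchange inequality is to a polynomial in \emph{two} variables; that cannot be right, because after differentiating in $S\cap T$ and zeroing out the rest you are left with a polynomial on $S\triangle T$, and the first nontrivial case ($\card{S\setminus T}=2$) already lives on four variables. This is precisely the quadratic-in-four-variables Hessian computation the paper carries out, followed by an induction on $k$ that compounds the constant to $2^{O(k)}$.

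Your fallback is in fact the paper's argument, and it is worth recording how the exchange constant is actually used there, because it enters only under a logarithm. Let $\tau$ be the coupon-collector time at which every coordinate of $S_0$ has been replaced at least once, so $\Pr{\tau>t}\le k e^{-t/k}$. The key lemma is that, conditional on $\tau\le t$, the law of $S_t$ has density at most $2^{O(k^2)}$ against $\mu$. This is proved by tracking the walk as an \emph{ordered} tuple, conditioning on the full sequence of replacement indices (hence on the last-replacement times $\tau_1<\cdots<\tau_k$), and bounding by induction on $i$ the probability that the $i$th coordinate equals a prescribed $f_i$ at time $\tau_i$; each inductive step is a single application of $2^{O(k)}$-exchange together with an injection $T\mapsto(U,V)$ between cross terms, and $k$ steps compound to $2^{O(k^2)}$ (the extra $k!$ from orderings is absorbed). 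From this warm start the KL-divergence to $\mu$ is $O(k^2)$, and the MLSI of \cite{CGM19} shrinks it below $\epsilon^2$ in $O(k\log(k/\epsilon))$ further steps; splitting the time budget in half and adding the $\Pr{\tau>t_1}$ error finishes. The point is that a crude $2^{O(k)}$ exchange suffices here precisely because it is exponentiated away by the MLSI, whereas in your coupling plan it would have to be paid at every step.
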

Note that generally we cannot hope for a better mixing time than $k\log k$; each step of the random walk $P$ replaces one element of the current set, and by a coupon collector argument, at least $\simeq k\log k$ steps are needed to replace every element of the starting set $S_0$. As long as $k$ is not too close to $n$, say $k<0.99n$, replacing every starting element is needed for sufficient mixing, even for the simple distribution $\mu$ which is uniform over $\binom{[n]}{k}$.

Our mixing time bound is an asymptotic improvement over prior work for $k=O(1)$, or more generally when $k$ is smaller than $\log(n)^\epsilon$ for all $\epsilon>0$. Another consequence of the new mixing time bound is that it enables the analysis of the down-up random walk when $n$ is infinitely large; for example, this is the case for continuous determinantal point processes \cite[see, e.g.,][]{OR18}.\footnote{We note however that one still needs to be able to implement each step of the random walk efficiently when $n$ is infinitely large. For examples where this is possible see \cite{OR18}.} To avoid complicating the notation, we do not consider infinitely large ground sets in this paper, but note that the results do generalize to such cases.


Our next result is the first quasi-linear time algorithm to sample from the uniform distribution over forests of a graph $G = G(V,E).$ This improves upon the recent result by \Textcite{ALOV19} which gives a polynomial time algorithm to sample random forest, but the run-time of this algorithm is far from being linear in the number of edges. Their algorithm samples and counts forests of fixed-size $k$ for each $k\leq \abs{V}-1$, thus takes at least $\Omega(\abs{V} \abs{E})$ time. Moreover, they employ the approximate sampling to approximate counting reduction \cite{JVV86,Anari2020IsotropyAL}, which introduces large polynomial blow-up in run-time. For application of sampling random forests, see e.g. \cite{Goel_connectivityin}.

In addition, we show a similar algorithm that also runs in quasi-linear and samples from the uniform distribution over spanning trees of $G.$ Much attention has been paid to the problem of sampling random spanning tree over the years, starting from the seminal works of \textcite{Aldous90,Broder89} who proposed a simple routine to extract a random spanning tree from the trace of a random walk on $G$ itself. Subsequent works introduced improved algorithms \cite{Wil96, CMN96, KM09, MST14, DKPRS17, DPPR17} until finally \textcite{Sch18} managed to obtain an almost-linear time algorithm running in time $n^{1+o(1)}$ on graphs with $n$ edges. This algorithm and that of several prior works were all based on the original work of \textcite{Aldous90, Broder89}; they achieved an improved running time by employing several clever, but complicated, tricks to shortcut the trace of a random walk over $G$. Our algorithms to sample a random spanning tree or random forest is wholly different, based on the down-up random walk, that achieves a nearly-linear running time of $n\log^2(n)$, while being arguably much simpler to describe and implement. Our algorithms can be naturally extended to sample from \textit{weighted} distribution over forests or spanning trees.
\begin{theorem} \label{thm:forests}
	There is an algorithm that takes a weighted graph $G = G (V, E)$ on $n$ edges with weight function $w: E \to \R_{\geq 0},$ parameters $q
	\geq 0$ and $\epsilon>0$ as input and outputs a forest $F \subseteq E$ in time $O(n\log(n)\log(n/\epsilon))$; the distribution of $F$ is guaranteed to be $\epsilon$-close in total variation distance to the distribution $\mu$ over forests of $G$ defined by 
	$\mu(F) \propto  q^{k-\abs{F}} w^F$ where $k$ is the rank of the graphic matroid of $G$, and $\abs{F}$ denotes the number of edges in $F.$
	
	In particular, for $w(e) = 1 \forall e\in E$, $\mu$ is the uniform distribution on forests of $G$ if $q=1,$ and is the uniform distribution on spanning trees of $G$ if $G$ is connected and $q=0$. 
\end{theorem}
In fact, we can extend \cref{thm:forests} to allow sampling from the uniform distribution over forests of size between $k_1$ and $k_2$, for any parameters $k_1, k_2,$ in quasi-linear time. 
\begin{theorem} \label{thm:constrainedForests}
There are algorithms that takes a weighted graph $G = G (V, E)$ on $n$ edges with weight function $w: E \to \R_{\geq 0},$ parameters $q
	\geq 0$, $k_1, k_2\in \N$ and $\epsilon>0$ as input and outputs a forest $F \subseteq E$ in time $O(n\log(n)\log(n/\epsilon))$; the distribution of $F$ is guaranteed to be $\epsilon$-close in total variation distance to the distribution $\mu^{(k_1, k_2)}$ over forests of $G$ defined by $\mu^{(k_1, k_2)} (F) \propto q^{k_2-\abs{F}} w^F$ if $\abs{F} \in [k_1, k_2],$ and $0$ otherwise.

\end{theorem}

Since our algorithm(s) is based on the MCMC method, they can only approximately sample from the forest or spanning tree distribution. In contrast, some of the prior works, including \cite{Sch18}, can sample exactly from the spanning tree distribution. This is mostly an inconsequential difference in practice, as no polynomial-time user of the algorithm can sense a difference between exact sampling and approximate sampling; one simply needs to set $\epsilon$ to be inverse-polynomially small.

We remark that our technique also leads to algorithm(s) that perform the more general task of approximately sampling from the uniform distribution over the family of independent sets of an arbitrary matroid, given access to suitable oracles. Specifically, for a matroid $\matroid = ([n], \mathcal{I})$ of rank $k$, an algorithm similar to the one from \cref{thm:forests} samples from a distribution that is $\epsilon$-close to the uniform distribution over the family of independent sets $\mathcal{I}$ using $O(n \log \frac{n}{\epsilon})$ time calls to oracle $\mathcal{O}'$ whose input-output behavior is described by:
\begin{itemize}
    \item $\mathcal{O}'$ takes as input a set $S\subseteq [n]$ that is guaranteed to contains at most one circuit
    \item outputs a uniformly random element from the unique circuit in $S$ if, such a circuit exists
\end{itemize}
For graphic matroid, we can implement $\mathcal{O'}$ with amortized quasi-constant query time using link-cut tree. In general,
since the input $S$ is guaranteed to have size at most $k+1$, we can implement each call to $\mathcal{O'}$ using $O(k)$ calls to the more familiar \textit{independent set oracle} $\mathcal{O}_I$ for $\matroid,$ resulting in a $O(kn \log \frac{n}{\epsilon})$-time algorithm. On the other hand, a modified version of the algorithm from \cref{thm:forests} can perform the same sampling task in $O(nk \log \frac{k}{\epsilon})$ calls to $\mathcal{O}_I.$ Though these algorithms still has sub-optimal runtime, they are an improvement upon \cite{ALOV19}'s algorithm. Their algorithm involves running the down-up walk to generate uniform samples from size-$\ell$ independent sets for each $\ell \leq k,$ then counting size-$\ell$ independent sets via a sampling to counting reduction
which would introduce a large run-time blow-up.

\subsection{Techniques}

In order to prove \cref{thm:main}, our strategy is to combine a new analysis of the initial steps of the down-up random walk with the previously known Modified Log-Sobolev Inequality \cite{CGM19}. Specifically we show that conditioned on having replaced every element of the starting set $S_0$ at least once by time $t$, the set at time $t$ can be used as a \emph{warm start} for the rest of the steps. Specifically, we show that the density of the set at time $t$ w.r.t.\ $\mu$, conditioned on this event, is upper-bounded by only a function of $k$.

In order to prove this, we introduce a new property of functions $\mu:\binom{[n]}{k}\to \R_{\geq 0}$ that we call $\alpha$-approximate exchange. This property says that for every $S, T\in \binom{[n]}{k}$, and $i\in S$, there exists $j\in T$ such that
\[ \mu(S)\mu(T)\leq \alpha\cdot \mu(S-i+j)\mu(T+i-j). \]
Note that when $\mu$ takes values in $\set{0,1}$ and $\alpha\geq 1$, this property becomes equivalent to the famous strong basis exchange axiom of matroids \cite{Oxl06}; if $\B=\mu^{-1}(1)$ is the family of sets indicated by $\mu$, this property says that for every $S, T\in \B$ and $i\in S$, there exists $j\in T$ such that $S-i+j\in \B$ and $T+i-j\in \B$. This property can be seen as a quantitative variant of strong basis exchange. Alternatively, it can be viewed as an approximate and multiplicative form of $M^\natural$-concavity, a cornerstone of discrete convex analysis \cite{MS99}. We prove that every $\mu$ with a log-concave generating polynomial satisfies $2^{O(k)}$-approximate exchange. Crucially, our $\alpha$ does not depend on $n$. We remark that \textcite{BH18} showed a result that can be thought of as a converse to this. They proved that $M^\natural$-concavity of $\log \mu$, equivalent to $1$-approximate exchange property, implies that the generating polynomial of $\mu$ is log-concave. We show that a similar approximate exchange property implies that a simple local search algorithm gives a $k^{2k}$-approximation on the problem of maximizing $\mu(S)$ for $\mu :\binom{[n]}{k}\to \R_{\geq 0}$ generated by a log-concave polynomial (\cref{lem:localsearch}). If the generating polynomial of $\mu :\binom{[n]}{k}\to \R_{\geq 0}$ is moreover strongly Rayleigh, then $\mu$ satisfies a slightly stronger exchange property (see \cref{lemma:rsExchange}) that in turn implies greedy gives a $k^{2k}$-approximation of $\max_S \mu(S)$ (see \cref{lem:greedy}). This is a generalization of \Textcite{Kha95}'s classical result that greedy produces a $k^{O(k)}$-approximation of the (sub)determinant maximization problem \cite{Kha95, DEFM14, Nik15}, as well as \cite{PackerO4,CM13}'s more recent result that greedy gives $k^{O(k)}$-approximation for the largest $j$-dimensional simplex problem. The best result on the largest $j$-dimensional simplex problem is a $2^{O(k)}$-approximation by \textcite{Nik15}, matching the lower bound given by \cite{DEFM14,CM13}.   

We discuss the high-level ideas for proving \cref{thm:forests}. For simplicity's sake, we consider the unweighted case i.e. $w(e)=1 \forall e\in E$.
It would be helpful to first discuss the special case $q=0$, $G$ is connected, and $\mu$ is uniformly distributed over spanning trees of $G$. We would like to use the down-up random walk from \cref{thm:main} to sample from $\mu$. 
Though the down-up walk on the support of $\mu$ mixes in nearly-linear time, we do not see a way to implement each step of it in polylogarithmic time. Fortunately, the down-up random walk on an equivalent family of sets, the \textit{dual} of the graphic matroid of $G$, which consists of the \emph{complements} of spanning trees, also mixes fast, and we can implement each step in amortized $O(\log n)$-time using link-cut tree \cite{ST83,RTF18}.

For $q\neq 0$, the distribution $\mu$ over forests of $G$ is not \textit{homogeneous} i.e. the support of $\mu$ contains different-size subsets of $E$, so we cannot immediately apply \cref{thm:main}. Let $\bar{\mu}$ be the \textit{complement} distribution of $\mu$ i.e. $\bar{\mu} (E \setminus F) =\mu(F)$ if $F$ is a forest, then sampling from $\mu$ and from $\bar{\mu}$ are equivalent. We add auxiliary elements to each $\overline{F} \in \supp(\bar{\mu})$ to obtain a homogeneous distribution. More precisely, using the \textit{Lorentzian polynomial} framework recently developed by \textcite{BH19}, we design a homogeneous distribution $\mu^{\uparrow}: \binom{E\cup Y}{n} \to \R_{\geq 0}$
whose projection to $E$ is $\bar{\mu}$ i.e. $\P_{T \sim \mu^{\uparrow}}{ T \cap E} = \bar{\mu} (T \cap E)$ where $Y$ is the set of auxiliary elements, such that the generating polynomial of $\mu^{\uparrow}$ is log-concave. Specifically, in \cref{lem:complementIndepPoly}, we prove that for any matroid $\matroid$ of rank $r$ over ground set $[n],$ the polynomial $f_{\matroid}(z_0, z_1, \cdots, z_n) = \sum_{S\in I(\matroid)} z_0^{\abs{S}} z^{[n] \setminus S}$ is Lorentzian, then use \textit{polarization} (see \cref{prop:polarization}) to transform $f_{\matroid}$ into a multi-affine homogeneous log-concave polynomial
\[f_{\matroid}^{\uparrow} (y_1, \cdots, y_r, z_1, \cdots, z_n) =  \sum_{S\in I(\matroid), T\in \binom{[r]}{\abs{S}}} \frac{1}{\binom{r}{\abs{S}}}   y^T z^{[n] \setminus S}\]
 That $f_{\matroid}$ is Lorentzian (or equivalently, completely log-concave) was not previously known, and could be of independent interest.
 
The distribution $\mu^{\uparrow}$ is generated by $f_{\matroid}^{\uparrow}.$ Our algorithm runs the down-up random walk on $\mu^{\uparrow}$, which mixes fast by \cref{thm:main}, then outputs $E \setminus  (T_t \cap E)$ where $T_t\in \supp(\mu^{\uparrow})$ is the random set we obtained after $t = O(n \log \frac{n}{\epsilon})$ down-up steps. Each step of the walk, even in the weighted case, can again be implemented in amortized $O(\log n)$-time using link-cut trees \cite{ST83, RTF18}. 

If we only consider the effect of the down-up walk on $\overline{T}_{t,E}: = E \setminus(T_t \cap E)$, then each step of the down-up walk can be viewed as follows: 
\begin{itemize}
    \item With probability $1-\frac{ \abs{\overline{T}_{t,E}} }{n}$, sample an edge $e \not\in \overline{T}_{t , E}$ uniformly at random and add $e$ to $\overline{T}_{t, E}$ 
    \item  
    If there is a cycle formed in $\overline{T}_{t, E}$ by the previous operation, remove an edge uniformly at random from the cycle. Else, with probability $\frac{q}{1+q}$, remove an edge uniformly at random from $\overline{T}_{t,E}.$ Note that this has no effects if $\overline{T}_{t,E}$ is already empty. 
\end{itemize}
Observe that if $q=0$, we never remove an edge from $\overline{T}_{t,E}$ unless $\overline{T}_{t,E}$ contains a cycle, thus if $\overline{T}_{0,E}$ is a spanning tree then so is $\overline{T}_{t,E}$ for all $t$. For $q=0$, our algorithm (to sample random spanning tree) is same as the one proposed by \textcite{RTF18}. Despite not having the tight mixing time analysis, they empirically observed fast mixing times for the proposed algorithm, and additionally showed how link-cut trees can be used to implement each step.

To prove \cref{thm:constrainedForests}, we only need to show that the following polynomial is Lorentzian
\[f_{\matroid}^{k_1} (z_0, z_1, \cdots, z_n)= \sum_{S \in \mathcal{I}(\matroid), \abs{S} \geq k_1} z_0^{\abs{S}} z^{[n]\setminus S}  \]
then set $\matroid$ to be the matroid whose bases are the size-$k_2$ forests of graph $G.$ Next, we employ the \textit{polarization} trick then running down-up walk framework which we use to prove \cref{thm:forests}. 
\subsection{Structure of the Paper}
In \cref{sec:prelim}  we provide some background on Markov chains and geometry of polynomials. In \cref{sec:mixing} we prove \cref{thm:main}. In \cref{sec:exchange} we prove certain approximate exchange properties, and their algorithmic implications. In \cref{sec:strees} we prove \cref{thm:forests,thm:constrainedForests}, as well as other results on sampling independent sets of an arbitrary matroid. The results in \cref{sec:strees} are mostly disjointed from \cref{sec:mixing,sec:exchange}.
\subsection{Acknowledgements}
The first author thanks Daniel Sleator and Gary L.\ Miller for insightful questions that led to the result on sampling spanning trees.

	\section{Preliminaries}
\label{sec:prelim}

We use $[n]$ to denote the set $\set{1,\dots,n}$ and $\binom{[n]}{k}$ to denote the family of size $k$ subsets of $[n]$. When $n$ is clear from context, we use $\1_S\in \R^n$ to denote the indicator vector of the set $S\subseteq [n]$, having a coordinate of $0$ everywhere except for elements of $S$, where the coordinate is $1$. We use $\conv$ to denote the operator that maps a set of points to their convex hull.

We use $z_S$ as shorthand for $\set*{z_i \given i \in S}$ and $z^S$ as shorthand for $\prod_{i\in S} z^i.$
For polynomial $f = \sum_{S} c_S z^S \in \C[z_0, \cdots, z_n],$ we let the support of $f$ be $\supp(f) := \set*{S: c_S \neq 0}$, and write $\partial_i f$ as shorthand for $\frac{\partial f}{\partial z_i}.$

We use $e_k(z_1, \cdots, z_n)$ to denote the $k$-th symmetric polynomial in $z_1, \cdots, z_n.$ We sometimes abuse notation and write $e_k ( u, z_{S})$ to denote the $k$-th symmetric polynomial in variables $z_S \cup \set*{u}.$
\subsection{Matroids}

In this paper we use one of the many cryptomorphic definitions of a matroid in terms of the polytope of its bases. For equivalence to other prominent definitions of a matroid, and more generally references to facts stated here see \cite{Oxl06}.
\begin{definition}\label{def:matroid}
	We say that a family $\B\subseteq \binom{[n]}{k}$ is the family of bases of a matroid if the polytope $\conv\set{\1_B\given B\in\B}$
	has only edges of the minimum possible length, namely $\sqrt{2}$. We call $k$ the rank of the matroid, and $[n]$ the ground set of the matroid.
	
	We let the family of independent sets of the matroid be $\mathcal{I} = \set*{I \in 2^{[n]} \given \exists B \in \B: I \subseteq B }$
\end{definition}

A well-known fact about matroids, that can be easily derived from \cref{def:matroid}, is that the dual of a matroid, defined below, is another matroid.
\begin{proposition}\label{prop:dual}
	If $\B\subseteq \binom{[n]}{k}$ is the family of bases of a matroid, then the following is also the family of bases of another matroid, called the dual matroid:
	\[ \B^*:=\set{[n]-B\given B\in \B}. \]
\end{proposition}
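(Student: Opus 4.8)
The plan is to realize $\B^*$ through an affine isometry of the ambient space $\R^n$ that maps the base polytope of $\B$ onto the base polytope of $\B^*$, and then to transport the edge-length condition of \cref{def:matroid} across this isometry. Concretely, set $P:=\conv\set{\1_B\given B\in\B}$ and $Q:=\conv\set{\1_{[n]-B}\given B\in\B}$, and consider the map $\phi\colon\R^n\to\R^n$ given by $\phi(x)=\1_{[n]}-x$; equivalently, $\phi$ is the point reflection through the center $\tfrac12\1_{[n]}$ of the cube $[0,1]^n$. Since $\norm{\phi(x)-\phi(y)}=\norm{x-y}$, the map $\phi$ is an affine isometry of $\R^n$, it is its own inverse, and by construction $\phi(\1_B)=\1_{[n]-B}$ for every $B\in\binom{[n]}{k}$.

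First I would record the structural consequences of $\phi$ being an affine bijection. It restricts to a bijection $P\to\phi(P)=Q$ and induces an isomorphism of face lattices preserving dimension; in particular it carries the vertices of $P$ to the vertices of $Q$ and the edges of $P$ to the edges of $Q$, and, being an isometry, it preserves their Euclidean lengths. Moreover each $\1_{[n]-B}$ is a vertex of the cube $[0,1]^n$, hence a vertex of $Q$, so the vertex set of $Q$ is exactly $\set{\1_{[n]-B}\given B\in\B}$ and $\B^*\subseteq\binom{[n]}{n-k}$ is in bijection with it.

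Next I would combine these observations. Because $\B$ is the family of bases of a matroid, every edge of $P$ has length $\sqrt2$, hence so does every edge of $Q$. On the other hand, the vertices of $Q$ are indicator vectors of $(n-k)$-element subsets of $[n]$, and two distinct sets of the same cardinality differ in at least two coordinates (each contains an element the other does not), so any two distinct vertices of $Q$ lie at Euclidean distance at least $\sqrt2$; thus $\sqrt2$ is the minimum possible edge length for $Q$. Therefore $Q$ has only edges of the minimum possible length $\sqrt2$, which is exactly the condition of \cref{def:matroid}, so $\B^*$ is the family of bases of a matroid (necessarily of rank $n-k$). The only point requiring care is the standard fact that an affine isomorphism of $\R^n$ carries the face lattice of a polytope isomorphically onto that of its image — so that edges really map to edges — together with the fact that an isometry additionally preserves edge lengths; once this is granted the argument is immediate, and I do not anticipate any genuine obstacle.
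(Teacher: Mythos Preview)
The paper does not actually prove \cref{prop:dual}; it merely asserts it as a well-known fact ``that can be easily derived from \cref{def:matroid}.'' Your argument is correct and is precisely the natural derivation the paper gestures at: the affine isometry $\phi(x)=\1_{[n]}-x$ carries the base polytope of $\B$ onto that of $\B^*$, preserving the face lattice and edge lengths, so the $\sqrt{2}$-edge condition transfers immediately.
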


In this paper we will use a famous class of matroids constructed from graphs, called graphic matroids.

\begin{proposition}\label{prop:graphic}
	Let $G=(V, E)$ be a graph. Then the following is the family of bases of a matroid, called the graphic matroid of $G$:
	\[ \set{T\subseteq E\given T\text{ forms a spanning forest}}. \]
\end{proposition}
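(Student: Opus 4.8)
\emph{Proof plan.} The plan is to apply Definition~\ref{def:matroid} directly: I will show that the \emph{spanning-tree polytope}
\[ P \;:=\; \conv\set*{\1_T \given T\subseteq E\text{ forms a spanning tree of }G} \]
has all of its edges of length exactly $\sqrt2$. We may assume $G$ is connected, as otherwise this family is empty. Every spanning tree of a connected graph has exactly $k:=\card V-1$ edges, so each vertex of $P$ is $\1_T$ for some $T\in\binom Ek$; moreover for two distinct spanning trees $T,T'$ we have $\card{T\setminus T'}=\card{T'\setminus T}\ge1$, hence $\norm{\1_T-\1_{T'}}_2=\sqrt{2\card{T\setminus T'}}\ge\sqrt2$, with equality iff the two trees differ by a single edge. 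Thus $\sqrt2$ is the smallest possible distance between vertices of $P$, and the task reduces to ruling out an edge of $P$ joining two vertices $\1_{T_1},\1_{T_2}$ with $\card{T_1\setminus T_2}\ge2$.

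For this I would invoke the standard polytope fact that if $F$ is a face of a polytope and a point in the relative interior of $F$ is written as $\tfrac12(x+y)$ with $x,y$ in the polytope, then $x,y\in F$. So it suffices to produce spanning trees $T_3\ne T_4$ with $\set{T_3,T_4}\ne\set{T_1,T_2}$ and
\[ \1_{T_3}+\1_{T_4}=\1_{T_1}+\1_{T_2}. \]
Indeed, if $[\1_{T_1},\1_{T_2}]$ were an edge (hence a face), applying the fact to its midpoint $\tfrac12(\1_{T_1}+\1_{T_2})=\tfrac12(\1_{T_3}+\1_{T_4})$ would give $\1_{T_3},\1_{T_4}\in[\1_{T_1},\1_{T_2}]$; but the only $0/1$ points on this segment are its two endpoints (since both $T_1\setminus T_2$ and $T_2\setminus T_1$ are nonempty), so $\set{\1_{T_3},\1_{T_4}}\subseteq\set{\1_{T_1},\1_{T_2}}$, and as $\1_{T_3}\ne\1_{T_4}$ this forces $\set{T_3,T_4}=\set{T_1,T_2}$, a contradiction.

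It remains to construct $T_3,T_4$, which is a \emph{strong basis-exchange} step carried out directly on $G$. Fix any $e\in T_1\setminus T_2$. Deleting $e$ splits $T_1$ into two trees on vertex sets $A,B$; here $e$ is the unique edge of $T_1$ crossing the cut $(A,B)$, and the two endpoints of $e$ lie on opposite sides of it. The path $Q$ joining the endpoints of $e$ inside $T_2$ therefore crosses $(A,B)$, so pick $f\in Q$ with one endpoint in $A$ and one in $B$. Then $f\ne e$ (as $e\notin T_2$ while $f\in Q\subseteq T_2$), so $f\notin T_1$. Now $T_3:=T_1-e+f$ is a spanning tree, since $f$ reconnects $A$ and $B$; and $T_4:=T_2-f+e$ is a spanning tree, since deleting $f$ from $T_2$ separates the endpoints of $f$ and hence — because $f\in Q$ — also the endpoints of $e$, which $e$ then reconnects. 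Finally $\1_{T_3}+\1_{T_4}=(\1_{T_1}-\1_e+\1_f)+(\1_{T_2}+\1_e-\1_f)=\1_{T_1}+\1_{T_2}$; moreover $f\in T_3\setminus T_4$, so $T_3\ne T_4$; $f\in T_3\setminus T_1$, so $T_3\ne T_1$; and $T_3=T_2$ would force $T_1\setminus T_2=\set e$, contradicting $\card{T_1\setminus T_2}\ge2$. Hence $\set{T_3,T_4}\ne\set{T_1,T_2}$, and the reduction above completes the proof.

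I expect the only real obstacle to be the simultaneity in the exchange step: finding a \emph{single} $f\in T_2\setminus T_1$ for which $T_1-e+f$ and $T_2-f+e$ are \emph{both} spanning trees. Choosing $f$ on the $T_2$-path between the endpoints of $e$ — equivalently, in the fundamental cut of $e$ in $T_1$ — is precisely what makes both requirements hold at once. The polytope-edge criterion and the vector bookkeeping are routine.
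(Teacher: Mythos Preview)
Your argument is correct. The paper does not actually prove \cref{prop:graphic}; it is stated as a standard fact in the preliminaries, with the surrounding text pointing to \cite{Oxl06} for proofs of the basic matroid facts. So there is nothing to compare against on the paper's side.

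That said, your write-up is a clean self-contained verification tailored to the paper's particular \cref{def:matroid} (the edge-length characterization of the basis polytope), which most references do not take as their starting definition. The exchange construction is right: choosing $f$ on the $T_2$-path between the endpoints of $e$ simultaneously places $f$ in the fundamental cycle of $e$ with respect to $T_2$ (so $T_2-f+e$ is a tree) and in the fundamental cut of $e$ with respect to $T_1$ (so $T_1-e+f$ is a tree), which is exactly the strong exchange step. The polytope step---that a midpoint decomposition forces the summands into the face---is the standard supporting-hyperplane argument, and your observation that the only $0/1$ points on the segment are its endpoints is immediate from looking at coordinates in $T_1\setminus T_2$. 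One small point you leave implicit but which is harmless: every $\1_T$ is indeed a vertex of $P$ (they are $0/1$ points, hence extreme in the cube and a fortiori in $P$), so edges of $P$ really do join two spanning-tree indicators.
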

Note that the rank of the graphic matroid is $\leq \card{V}-1$ and the ground set is $E$. If $G$ is connected, then the bases are spanning trees of $G$, and the rank is exactly $\card{V} -1.$

\subsection{M-Convex Sets}
\begin{definition}[M-convex sets] \label{def:m-convex}
We define a subset $J \subseteq \N
^n$ to be M-convex if it satisfies any one of the following equivalent conditions:
\begin{itemize}
    \item  For any $\alpha
    ,\beta\in J$ and any index $i$ satisfying $\alpha_i > \beta_i$, there is an index $j$ satisfying
$\alpha_j < \beta_j$ and $\alpha- e_i + e_j \in  J.$
\item For any $\alpha, \beta \in J$ and any index $i$ satisfying $\alpha_i > \beta_i$
, there is an index $j$ satisfying
$\alpha_j < \beta_j$ and $\alpha -e_i + e_j \in J$ and $\beta - e_j + e_i \in J.$
\end{itemize}
We note that any M-convex set $J$ must be a subset of $\Delta_n^d:=\set*{\alpha \in \N^n \given \abs{\alpha}_1 = d}$ for some fixed $d$. Conversely, for $d=1$, any $J \subseteq \Delta_n^1$ is M-convex. 
\end{definition}
\subsection{Stable Polynomials}
\begin{definition}[Half-plane stable]
	Consider an open half-plane $H_{\theta} = \set*{e^{-i\theta} z \given \Im(z) > 0} \subseteq \C.$ We say a polynomial $g(z_1,\cdots, z_n) \in \C[z_1, \cdots, z_n]$ $H_{\theta}$-stable if $g$ does not have root in $H_{\theta}^n.$ In particular, the zero polynomial is $H_{\theta}$-stable.
	
	We call $H_0$ and $H_{\pi/2}$ the upper-half and right-half plane respectively. 
	We say $g$ is Hurwitz stable if it is $H_{\pi/2}$-stable. We say $g$ is real stable if it is $H_0$-stable and has real coefficients.
	
	We observe that for homogeneous polynomials, the definition of $H_{\theta}$-stable is equivalent for all angles $\theta.$
\end{definition}
A distribution $\mu: 2^{[n]} \to \R_{\geq 0}$ is strongly Rayleigh if and only if its generating polynomial is real stable \cite[see][]{BBL09}.
	
	Real stability is preserved under differentiation, and identification.
	\begin{theorem} [{\cite[Lemma 2.4]{wagner2009multivariate}}] \label{thm:real stable derivative}
	    If $g \in \R[z_1, \cdots, z_n]$ is real stable, then the following are also real stable
	    \begin{itemize}
	        \item $g\mid_{z_i = a}$ for $a\in \R$
	        \item $\partial_i g$ for all $i\in [n].$ 
	    \end{itemize}
	    
	\end{theorem}
	We will need the following classical fact \cite[see e.g.][Proposition 3.1 for a proof]{BBL09}.
\begin{theorem} \label{thm:symmetric} For $k \leq n$, the $k$-th symmetric polynomial in $n$ variables $e_k(z_1, \cdots, z_n)$ is real stable.
\end{theorem}

\subsection{Log-Concave Polynomials}

For a distribution or density function $\mu:\binom{[n]}{k}\to \R_{\geq 0}$ we denote by $g_\mu$ the generating polynomial of $\mu$ defined as
\[ g_\mu(z_1,\dots,z_n):=\sum_{S\in \binom{[n]}{k}} \mu(S)\prod_{i\in S} z_i. \]

We call a polynomial $g\in \R[z_1,\dots,z_n]$ with nonnegative coefficients log-concave when viewed as a function, it is log-concave over the positive orthant, i.e., for $x,y\in \R_{\geq 0}^n$ and $\lambda \in (0, 1)$
\[ g(\lambda x+(1-\lambda)y)\geq g(x)^\lambda g(y)^{1-\lambda}. \]
For a multiaffine polynomial $g$, its derivatives can be obtained as
\[ \partial_1 g=\lim_{c\to \infty}\frac{g(c,z_2,\dots,z_n)}{c}. \]
This shows that the derivatives of a multiaffine log-concave polynomial are limits of log-concave polynomials, which themselves are log-concave. It follows that a multiaffine homogeneous log-concave polynomial satisfies the seemingly stronger notions of \emph{strong log-concavity} \cite{Gur09} and \emph{complete log-concavity} \cite{AOV18, ALOV18, BH19}. The latter means that such polynomials are also closed under directional derivatives
\begin{lemma}\label{lem:clc}
	Let $g\in \R[z_1,\dots,z_n]$ be a multiaffine homogeneous polynomial with nonnegative coefficients. If $g$ is log-concave, then it is completely log-concave as well, which means that for any $k\in \Z_{\geq 0}$ and directions $v_1,\dots, v_k\in \R_{\geq 0}^n$, the following polynomial is log-concave:
	\[ \partial_{v_1}\cdots \partial_{v_k} g. \]
\end{lemma}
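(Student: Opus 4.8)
The plan is to induct on $k$, which reduces everything to the following claim: if $g$ is multiaffine, homogeneous of degree $d$, has nonnegative coefficients, and is log-concave on $\R^n_{\geq 0}$, then for every $v\in\R^n_{\geq 0}$ the polynomial $\partial_v g$ has the same four properties (with degree $d-1$); applying this $k$ times to $g,\partial_{v_1}g,\partial_{v_2}\partial_{v_1}g,\dots$ gives the lemma. That $\partial_v g$ is multiaffine, homogeneous of degree $d-1$, and has nonnegative coefficients is immediate, since $\partial_v g=\sum_{|T|=d-1}c_T\prod_{i\in T}z_i$ with $c_T=\sum_{i\notin T}v_i\cdot(\text{coefficient of }\prod_{j\in T\cup\{i\}}z_j\text{ in }g)\ge 0$. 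So the whole content of the induction step is that $\partial_v g$ is again log-concave (with the usual convention that the zero polynomial counts as log-concave). I will use repeatedly the two closure properties already recorded in the excerpt: log-concavity on the positive orthant is preserved (i) under precomposition with an affine map sending $\R^m_{\geq 0}$ into $\R^n_{\geq 0}$, and (ii) under pointwise limits.

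First I would settle the case of a coordinate direction $v=e_i$. Here the displayed identity $\partial_i g=\lim_{c\to\infty}c^{-1}g(z_1,\dots,z_{i-1},c,z_{i+1},\dots,z_n)$ does all the work: for each fixed $c>0$ the function inside the limit is $g$ precomposed with the nonnegative affine substitution $z_i\mapsto c$, hence log-concave by (i); dividing by the positive constant $c$ keeps it log-concave; and by (ii) the limit $\partial_i g$ is log-concave. Since $\partial_i g$ inherits multiaffinity, homogeneity and nonnegativity of coefficients, this argument iterates, so every iterated coordinate derivative $\partial_{i_1}\cdots\partial_{i_\ell}g$ is log-concave; equivalently, $g$ is strongly log-concave in the sense of \cite{Gur09}.

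The remaining and genuinely harder step is to upgrade from coordinate directions to an arbitrary $v\in\R^n_{\geq 0}$. The rescaling limit above is of no help here: for a multiaffine polynomial, rescaling along a sum of several coordinates introduces higher and higher powers of the scaling parameter, so it does not converge to a directional derivative. I see two ways to close the gap. The clean one is to invoke the known equivalence, for homogeneous polynomials with nonnegative coefficients, between strong log-concavity in the sense of Gurvits and complete log-concavity (the Lorentzian property) established in \cite{AOV18,ALOV18,BH19} — complete log-concavity is by definition closure under all nonnegative directional derivatives, so combined with the previous paragraph this is exactly what is claimed. A more self-contained route, which I would sketch, is a polarization argument: introduce one auxiliary variable $w$ and set $\hat g(z,w):=g(z+wv)$, which is log-concave on $\R^{n+1}_{\geq 0}$ by (i) and is multiaffine in $z_1,\dots,z_n$ but of degree up to $d$ in $w$. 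Writing $\hat g(z,w)=\sum_{j=0}^d w^j p_j(z)$ with $p_1=\partial_v g$, polarize in the single variable $w$, i.e.\ replace $w^j$ by $e_j(w_1,\dots,w_d)/\binom{d}{j}$; the result $Q(z,w_1,\dots,w_d)$ is multiaffine in all $n+d$ variables, homogeneous of degree $d$, has nonnegative coefficients, and satisfies $Q(z,w,\dots,w)=\hat g(z,w)$, so $\partial_{w_1}Q$ evaluated at $w_2=\dots=w_d=0$ equals $\tfrac1d\,\partial_v g(z)$. If one grants that $Q$ is still log-concave, then the coordinate-derivative step applies to $Q$ and, after restricting to the coordinate subspace $\{w_2=\dots=w_d=0\}$ via (i), shows $\partial_v g$ is log-concave. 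The one nontrivial point in this route — and the main obstacle in the whole proof — is exactly that polarization preserves log-concavity on the positive orthant; if one prefers not to prove that here, the cited equivalence of the two notions is the honest shortcut.
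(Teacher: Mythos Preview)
Your proposal is correct and matches the paper's approach: the paper also derives strong log-concavity from the limit formula $\partial_i g=\lim_{c\to\infty}c^{-1}g(z_1,\dots,c,\dots,z_n)$ and then simply cites \cite{AOV18,ALOV18,BH19} for the equivalence with complete log-concavity, without attempting anything more self-contained. Your polarization sketch goes beyond what the paper does, and you correctly flag that its key step---preservation of log-concavity under polarization---is itself a nontrivial result from that same literature.
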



We will need this alternative characterization of completely log-concave homogeneous polynomial, which appears in \cite[Definition 2.6]{BH19} as \textit{Lorentzian polynomials}.

\begin{definition}[Lorentzian polynomials]
Let $g\in \R[z_1, \cdots, z_n]$ be a homogeneous polynomial of degree $d$ with nonnegative coefficients. We say $g$ is a Lorentzian polynomial if either $d\leq 1$, or $d\geq 2$, $\supp(g)$ is M-convex, and $\partial^{\alpha} g$ is real-stable for all $\alpha$ satisfying $\abs{\alpha}_1 = d-2.$
\end{definition}
In particular, if $g$ is Lorentzian then its support $\supp(g)$ is $M$-convex, and all its (directional) derivatives are Lorentzian.
\begin{proposition}[\cite{BH19}]  
    If $g$ is a Lorentzian polynomial, then for any $v \in \R_{\geq 0}$, $\partial_v g$ is Lorentzian.
\end{proposition}
\begin{theorem}[\cite{BH19}]\label{thm:lorentzian} 
    Let $g \in \R[z_1, \cdots, z_n]$ be a homogeneous polynomial with nonnegative coefficients. The following are equivalent:
    \begin{itemize}
    \item $g$ is completely log-concave
    \item $g$ is Lorentzian
    \end{itemize}
\end{theorem}

An important class of polynomials are those associated with uniform distributions over bases of a matroid.
\begin{theorem}[\cite{AOV18} based on \cite{AHK18}] \label{thm:basePoly}
	If $\B\subseteq \binom{[n]}{k}$ is the family of bases of a matroid, then the following polynomial is log-concave:
	\[ g(z_1,\dots,z_n):=\sum_{B\in \B} \prod_{i\in B} z_i. \]
\end{theorem}
\begin{theorem}[\cite{BH19,ALOV18}]\label{thm:indepPoly} 
    For any matroid $\matroid$ with family of independent sets $\mathcal{I}$, the polynomial
\[g_{\matroid}(y, z_1, \cdots , z_n) = \sum_
{I\in\mathcal{I}}
y^{n-\abs{I}} \prod_{i\in I} z_i\]
in $\R[y, z_1, \cdots, z_n]$ is completely log-concave.
\end{theorem}

We will use the following simple fact about log-concave polynomials:
\begin{proposition}[{\cite[see][]{ALOV18,BH18}}]\label{prop:oneeig}
	If $g$ is a log-concave polynomial with nonnegative coefficients, then $\nabla^2 g$ evaluated at any point in the positive orthant has at most one positive eigenvalue.
\end{proposition}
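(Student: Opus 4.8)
The plan is to convert the functional definition of log-concavity into an infinitesimal (Hessian) inequality and then invoke a short linear-algebra fact. First I would reduce to the case of a point $x$ lying in the open orthant $\R_{>0}^n$. If $g\equiv 0$ the statement is vacuous, so assume otherwise; then $g$ has a strictly positive coefficient and hence $g(x)>0$ for every $x\in\R_{>0}^n$. The entries of $\nabla^2 g$ are polynomials, so $x\mapsto\nabla^2 g(x)$ is continuous, and the set of real symmetric matrices with at most one positive eigenvalue is closed (the second-largest eigenvalue is a continuous function of the matrix); therefore, once the claim is established on $\R_{>0}^n$, it extends to every boundary point of the positive orthant by taking a limit along interior points.

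On the open orthant $\log g$ is a well-defined $C^2$ function, and by hypothesis it is concave there, so $\nabla^2\log g(x)\preceq 0$ for all $x\in\R_{>0}^n$. Computing the Hessian,
\[ \nabla^2\log g=\frac{\nabla^2 g}{g}-\frac{(\nabla g)(\nabla g)^{\!\top}}{g^2}, \]
and multiplying the resulting inequality by the positive scalar $g(x)$ gives, at every $x\in\R_{>0}^n$,
\[ \nabla^2 g(x)\preceq \frac{1}{g(x)}\,(\nabla g(x))(\nabla g(x))^{\!\top}=:B, \]
where $B$ is positive semidefinite with $\rank(B)\le 1$.

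It remains to prove the purely linear-algebraic claim: if $A,B$ are real symmetric $n\times n$ matrices with $A\preceq B$, $B\succeq 0$, and $\rank(B)\le 1$, then $A$ has at most one positive eigenvalue. Suppose not; then there is a $2$-dimensional subspace $W\subseteq\R^n$ on which $A$ is positive definite, i.e.\ $w^{\!\top}Aw>0$ for every nonzero $w\in W$. Since $\rank(B)\le 1$, the kernel of $B$ has dimension at least $n-1$, so $\dim(W\cap\ker B)\ge 2+(n-1)-n=1$; choosing a nonzero $w$ in this intersection yields $0<w^{\!\top}Aw\le w^{\!\top}Bw=0$, a contradiction. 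Applying this with $A=\nabla^2 g(x)$ completes the argument.

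I do not expect a serious obstacle here. The only point requiring a little care is the passage from the definition of log-concavity (an inequality between values of $g$) to the Hessian inequality, which is legitimate precisely because $g$ is a polynomial, hence smooth, and strictly positive on the open orthant; the extension to the boundary is handled by the continuity and closedness argument in the first paragraph.
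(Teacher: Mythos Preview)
Your argument is correct and is precisely the standard one: pass from concavity of $\log g$ on the open orthant to the matrix inequality $g\,\nabla^2 g\preceq(\nabla g)(\nabla g)^{\!\top}$, then observe that a symmetric matrix dominated by a rank-one positive semidefinite matrix can have at most one positive eigenvalue. Note that the paper does not actually supply a proof of this proposition---it is quoted as a known fact with references to \cite{ALOV18,BH18}---and the argument you wrote is essentially the one found in those sources, so there is nothing to compare against within the present paper.
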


One of the basic operations preserving (complete) log-concavity is composition with a linear map. That is if $T:\R^m\to \R^n$ is an affine linear map for which $T(R_{\geq 0}^m)\subseteq \R_{\geq 0}^n$, then $g\circ T$ is (completely) log-concave as well. We state other operations that preserve Lorentzian property.
        \begin{proposition}[{\cite[Polarization]{BH19}}] \label{prop:polarization} 
    For an element $\kappa $ of $\N^n$ let
    \[\R_{\kappa}[z_1, \cdots, z_n] = \set*{\text{polynomials in $\R[z_i]_{1\leq i\leq n}$ of degree at most $\kappa_i$
in $z_i\forall i$}} \]
\[\R_{\kappa}^a[z_{ij}] = \set*{\text{
multi-affine polynomials in $\R[z_{ij}]_{1\leq i\leq n, 1 \leq j \leq \kappa_i}$}}\]
    The polarization map $\prod^{\uparrow}_{\kappa}$ is a linear map that sends monomial $z^{\alpha} = \prod_{i=1}^n z_i^{\alpha_i}$ to the product
    \[\frac{1}{\binom{\kappa}{\alpha}} \prod_{i=1}^n (\text{
elementary symmetric polynomial of degree $\alpha_i$
in the variables $\set{z_{ij}}_{1\leq j \leq \kappa_i}$}
)\]
where $\binom{\kappa}{\alpha} = \prod_{i=1}^n \binom{\kappa_i}{\alpha_i}.$
If $g \in \R_{\kappa}[z_i]_{1\leq i\leq n}$ is Lorentzian then $\prod^{\uparrow}_{\kappa}(g)$ is also Lorentzian. 
\end{proposition}
\begin{proposition} \label{prop:product}
    The product of two Lorentzian polynomials is also Lorentzian. 
\end{proposition}

\subsection{Down-Up Random Walk}

For two distributions $\nu, \mu$ we define the Kullback-Leibler divergence, KL-divergence for short, $\D{\nu\mid \mu}$ as
\[ \D{\nu \mid \mu}:=\ExX*{S\sim \mu}{\frac{\nu(S)}{\mu(S)}\log \frac{\nu(S)}{\mu(S)}}=\ExX*{S \sim \nu}{\log\frac{\nu(S)}{\mu(S)}},  \]
and the total variation distance between $\nu$ and $\mu$ as
\[ \norm{\nu-\mu}_\tv := \frac{1}{2}\sum_{S} \abs{\nu(S)-\mu(S)}. \]
The two are related by Pinsker's inequality:
\begin{proposition}[{\cite[see, e.g.,][]{CT12}}]\label{prop:pinsker}
	KL-divergence and the total variation distance are related by the following inequality
	\[ \norm{\nu-\mu}_\tv \leq \sqrt{\frac{1}{2}\D{\nu\mid \mu}} \]
\end{proposition}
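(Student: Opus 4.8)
The plan is to use the classical two-step proof of Pinsker's inequality: first reduce to a two-point space, then verify the resulting one-variable inequality by elementary calculus.

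First I would reduce to the binary case. Let $A := \set{S \given \nu(S) \geq \mu(S)}$ and write $p := \nu(A)$, $q := \mu(A)$, so that $p \geq q$. Splitting the sum $\frac{1}{2}\sum_S \abs{\nu(S) - \mu(S)}$ according to the sign of $\nu(S) - \mu(S)$ and using $\sum_S \nu(S) = \sum_S \mu(S) = 1$ gives $\norm{\nu - \mu}_\tv = \nu(A) - \mu(A) = p - q$. For the divergence I would apply the log-sum inequality (equivalently, convexity of $t \mapsto t\log t$, i.e.\ the data-processing inequality for the coarsening to $\set{A, A^c}$) separately on $A$ and on its complement:
\[ \D{\nu \mid \mu} = \sum_S \nu(S)\log\frac{\nu(S)}{\mu(S)} \geq p\log\frac{p}{q} + (1-p)\log\frac{1-p}{1-q}. \]
Hence it suffices to prove $2(p-q)^2 \leq p\log\frac{p}{q} + (1-p)\log\frac{1-p}{1-q}$ for all $0 \leq q \leq p \leq 1$.

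For this one-variable inequality I would fix $p$ and set $h(q) := p\log\frac{p}{q} + (1-p)\log\frac{1-p}{1-q} - 2(p-q)^2$ on $q \in (0, p]$. Then $h(p) = 0$, and a short computation gives
\[ h'(q) = -\frac{p}{q} + \frac{1-p}{1-q} + 4(p-q) = (p-q)\parens*{4 - \frac{1}{q(1-q)}}. \]
Since $q(1-q) \leq \frac14$, the second factor is nonpositive, and $p - q \geq 0$, so $h'(q) \leq 0$; thus $h$ is nonincreasing on $(0,p]$ and $h(q) \geq h(p) = 0$. Combining with the reduction above yields $(p-q)^2 \leq \frac{1}{2}\D{\nu\mid\mu}$, and taking square roots gives $\norm{\nu - \mu}_\tv = p - q \leq \sqrt{\frac{1}{2}\D{\nu\mid\mu}}$.

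The degenerate cases cause no trouble: if $q = 0$ while $p > 0$ the right-hand side is $+\infty$ and there is nothing to prove; if $p \in \set{0, 1}$ the vanishing term is read as $0$ under the convention $0\log 0 = 0$. The step that carries the content is the reduction to two points via the log-sum inequality; after that the whole statement collapses to the elementary estimate $q(1-q) \leq \frac14$, so I anticipate no real difficulty.
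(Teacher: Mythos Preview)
Your proof is correct and is the standard textbook argument for Pinsker's inequality (reduce to two points via the log-sum/data-processing inequality, then verify the binary case by the elementary estimate $q(1-q)\le \tfrac14$). The paper itself does not prove this proposition at all---it is stated as a known fact with a citation to \cite{CT12}---so there is no proof in the paper to compare against; your argument supplies exactly what the cited reference would.
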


\Textcite{CGM19} proved shrinkage of the KL-divergence under the down-up random walk. Coupled with Pinsker's inequality, this resulted in a mixing time bound.
\begin{lemma}{\cite{CGM19}}\label{lem:mlsi}
	If $\nu, \mu:\binom{[n]}{k}\to \R_{\geq 0}$ are distributions where $\mu$ has a log-concave generating polynomial, and $P$ is the down-up random walk operator whose stationary distribution is $\mu$, then
	\[ \D{\nu P\mid \mu P}=\D{\nu P\mid \mu}\leq (1-1/k)\D{\nu\mid \mu}. \]
\end{lemma}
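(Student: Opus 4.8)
Since $\mu$ is stationary for $P$ we have $\mu P = \mu$, so the identity $\D{\nu P \mid \mu P} = \D{\nu P \mid \mu}$ is immediate, and the content is the contraction $\D{\nu P \mid \mu} \le (1-1/k)\,\D{\nu\mid\mu}$; we may assume $\supp\nu\subseteq\supp\mu$ since otherwise the right-hand side is $+\infty$. The plan is to factor $P$ through level $k-1$, split the divergence along this factorization with the chain rule, use the data-processing inequality to control the walk by the ``projected'' divergence, and thereby reduce the factor $1-1/k$ to an approximate tensorization of entropy --- the single place where log-concavity is used.

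First I would write $P = DU$, where the down operator $D$ sends a distribution on $\binom{[n]}{k}$ to the law of the size-$(k-1)$ set obtained by deleting a uniformly random element, and the up operator $U$ sends a size-$(k-1)$ set $T$ to $T+j$ with probability proportional to $\mu(T+j)$. As the up-step consults only $\mu$ and the current set, $\nu P = (\nu D)U$ while $\mu P = \mu = (\mu D)U$, so the data-processing inequality applied to the common channel $U$ gives $\D{\nu P\mid\mu}\le\D{\nu D\mid\mu D}$. Separately, the chain rule for relative entropy applied to the joint law of $(S,T)$, with $S$ drawn from the source distribution and $T$ its random deletion --- whose conditional law given $S$ is the same under $\nu$ and under $\mu$ --- yields
\[
\D{\nu\mid\mu} = \D{\nu D\mid\mu D} + \ExX*{T\sim\nu D}{\D{\nu^{\downarrow T}\mid\mu^{\downarrow T}}},
\]
where $\rho^{\downarrow T}(S)\propto\rho(S)$ denotes the link distribution on the size-$k$ supersets of $T$. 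Combining the two facts, it suffices to prove
\[
\D{\nu\mid\mu} \le k\cdot\ExX*{T\sim\nu D}{\D{\nu^{\downarrow T}\mid\mu^{\downarrow T}}},
\]
an approximate tensorization of entropy with constant $k$ valid for every log-concave $\mu$.

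To establish this tensorization I would induct on $k$; the case $k=1$ is an identity. In the inductive step the structural input is that both the projection $\mu D$ and every link $\mu^{\downarrow T}$ again have log-concave generating polynomials: the former is, up to scaling, the directional derivative $(\partial_1+\dots+\partial_n)g_\mu$ in the nonnegative direction $(1,\dots,1)$, and a link's generating polynomial is obtained from $g_\mu$ by a sequence of nonnegative coordinate derivatives, so both are log-concave by \cref{lem:clc}. Hence the down-up walks of all links obey the inductive hypothesis, and the remaining task is to glue the local statements into the global one. I would do this by telescoping the chain-rule decomposition through every level, which rewrites $\D{\nu\mid\mu}$ as a weighted sum of one-step local divergences over faces of all codimensions, and then by bounding the amount each projected divergence $\D{\nu D^j\mid\mu D^j}$ can exceed the local divergences peeled off beneath it; this excess is governed by the spectral expansion of the codimension-$2$ link walks, which is precisely the regime controlled by \cref{prop:oneeig}, since the Hessian of the (log-concave) generating polynomial of such a link has at most one positive eigenvalue, bounding the second eigenvalue of its $1$-skeleton walk by the quantity needed to make the constant telescope to exactly $k$.

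The main obstacle is this local-to-global gluing for \emph{entropy} rather than for variance. The spectral-gap analogue --- the bound of $1/k$ on the spectral gap of $P$ from \textcite{ALOV19} --- follows from a short manipulation with quadratic forms, but upgrading local entropy contraction to global entropy contraction requires comparing the Dirichlet form, the variance, and the entropy of each link walk simultaneously and tracking how these comparisons compound across the $k$ levels, and the delicate point is to keep the final constant at exactly $k$ rather than losing, say, a $\log k$ factor. I expect essentially all of the work to sit here; the reductions above are straightforward bookkeeping. A viable alternative I would also pursue is to prove directly from \cref{prop:oneeig} that log-concavity implies ``entropic independence'' with the optimal parameter $1/k$ --- a single-level inequality, to the effect that dropping to a uniformly random element contracts the divergence by a factor of $k$ --- and then invoke a local-to-global principle for entropic independence; this repackages the same difficulty but tends to make the constant more transparent.
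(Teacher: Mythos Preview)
The paper does not prove \cref{lem:mlsi}; it is quoted verbatim from \cite{CGM19} and used only as a black box inside the proof of \cref{thm:main}. So there is no ``paper's own proof'' to compare against.

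That said, your outline is the correct architecture and is essentially the one in \cite{CGM19}: factor $P=DU$, use data processing for the channel $U$ to get $\D{\nu P\mid \mu}\le \D{\nu D\mid\mu D}$, apply the chain rule for relative entropy to the joint $(S,T)$ to obtain
\[
\D{\nu\mid\mu}=\D{\nu D\mid\mu D}+\ExX*{T\sim\nu D}{\D{\nu^{\downarrow T}\mid\mu^{\downarrow T}}},
\]
and reduce the contraction $(1-1/k)$ to the approximate tensorization $\D{\nu\mid\mu}\le k\cdot\ExX{T\sim\nu D}{\D{\nu^{\downarrow T}\mid\mu^{\downarrow T}}}$. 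You also correctly locate the only place log-concavity enters: closure of links and of the projection $\mu D$ under log-concavity via \cref{lem:clc}, and the one-positive-eigenvalue property of codimension-two link Hessians from \cref{prop:oneeig}. Your honest assessment that the entire difficulty lies in the local-to-global step for \emph{entropy} (as opposed to variance) is accurate, and your sketch of that step---telescoping the chain rule through all levels and controlling the loss at each level by the local spectral expansion---is the right shape but stops short of the actual mechanism; in \cite{CGM19} this is carried out by a careful induction that compares the level-$j$ entropy to a convex combination of link entropies, using the quadratic (variance) inequality from \cref{prop:oneeig} inside a Jensen-type argument to avoid losing more than the required factor. Your proposed alternative via ``entropic independence with parameter $1/k$'' is also valid and in fact became the cleaner packaging in later work, but it postdates \cite{CGM19} and repackages, rather than avoids, the same difficulty.
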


	\section{Mixing Time Analysis}
\label{sec:mixing}

In this section we prove \cref{thm:main} by analyzing the down-up random walk for distributions $\mu$ that have a log-concave generating polynomial. As a reminder, in each step, the down-up random walk transitions from a set $S\in \binom{[n]}{k}$ to $S'\in \binom{[n]}{k}$ as follows:
\begin{itemize}
	\item From $S$ choose a subset $T\subseteq S$ of size $k-1$ uniformly at random.
	\item From all supersets $S'\supseteq T$, choose one with probability $\propto \mu(S')$.
\end{itemize}
Notice that the first step above simply drops a uniformly random element, and the second step replaces it with a new one (potentially the same element). Our high-level strategy is to prove that in $O(k\log k)$ steps, every element of the initial set is replaced at least once, and when this happens the distribution becomes a \emph{warm start} and converges to $\mu$ in an additional $O(k\log k)$ steps.

Let $\tau$ be the first time such that every element in our initial set has been replaced at least once. In other words think of initial elements as unmarked, and every time we replace an element we mark the new element brought in. Then $\tau$ is the first time that every element is marked.

We will prove the following:
\begin{lemma}\label{lem:mix}
	Let $S_t$ be the set at time $t$ in the down-up random walk. Then for any $X\in \binom{[n]}{k}$ and any time $t$,
	\[ \Pr{S_t=X\given \tau \leq t}\leq 2^{O(k^2)} \PrX{\mu}{X}. \]
\end{lemma}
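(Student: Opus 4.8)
The plan is to prove the lemma by induction on the rank $k$, combining the $2^{O(k)}$-approximate exchange property with an analysis of the walk up to the stopping time $\tau$. It is convenient to lift the down-up walk to a \emph{labelled} walk on $k$ slots: at each step pick a uniformly random slot, drop its current element, and re-insert an element with probability proportional to $\mu$ on the resulting supersets; projecting to the unordered set recovers $P$. One checks that $\{\tau\le t\}$ is exactly the event that every slot has been picked at least once, which depends only on the sequence of picked slots and is a coupon-collector event.

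The first step is to reduce to the conditional law of $S_\tau$ itself. Conditioned on $\tau=r$, the slots picked after step $r$, together with all re-insertion randomness, are unconstrained, so given $S_r$ the chain continues as the unconditioned walk $P$. Since $P$ is stochastic and $\mu$-stationary it preserves pointwise domination by $\mu$ (if $\nu\le C\mu$ then $\nu P\le C\mu P=C\mu$). Hence if $\mathrm{Law}(S_r\mid\tau=r)\le C\mu$ pointwise for every $r$, then $\mathrm{Law}(S_t\mid\tau\le t)\le C\mu$, and it suffices to show $\mathrm{Law}(S_\tau\mid\tau=r)\le 2^{O(k^2)}\mu$ for every $r$.

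Next I would peel off the last first-touch. Let $\ell$ be the slot picked at step $\tau-1$; it still holds its original element $a\in S_0$ just before that step. Writing $W$ for the set of contents of the other $k-1$ slots at time $\tau-1$, we have $S_\tau=W+c$ with $c$ drawn with probability $\propto\mu(W+\cdot)$, so $\mathrm{Law}(S_\tau\mid\tau=r)(X)=\mu(X)\sum_{x\in X}\Pr{W=X-x\mid\tau=r}/m(X-x)$, where $m(W):=\sum_e\mu(W+e)$. Since $m/k$ equals the one-step-down distribution $\mu^\downarrow$ on $\binom{[n]}{k-1}$, it is enough to bound the density of $\mathrm{Law}(W\mid\tau=r)$ against $\mu^\downarrow$ by $2^{O(k^2)}$. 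Now recurse: conditioned on $\ell$ being a fixed slot holding $a$, that slot is untouched before step $\tau-1$, so the remaining $k-1$ slots run exactly the down-up walk for the link $\mu_a:=\mu(\,\cdot\,+a)$, whose generating polynomial $\partial_{z_a}g_\mu$ is log-concave by \cref{lem:clc}. The inductive hypothesis (\cref{lem:mix} at rank $k-1$, applied with time $r-1$) bounds the density of $\mathrm{Law}(W\mid\ell\text{ holds }a,\tau=r)$ against $\mu_a/\PrX{\mu}{a\in S}$ by $2^{O((k-1)^2)}$; averaging over the choice of $\ell$ (uniform among slots, by symmetry of coupon-collector) gives $\mathrm{Law}(W\mid\tau=r)(W)\le \tfrac1k 2^{O((k-1)^2)}\sum_{a\in S_0\setminus W}\mu(W+a)/\PrX{\mu}{a\in S}$.

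It remains to show this is at most $2^{O(k)}\cdot 2^{O((k-1)^2)}\cdot\mu^\downarrow(W)$, i.e.\ that $\sum_{a\in S_0\setminus W}\mu(W+a)/\PrX{\mu}{a\in S}$ is within a factor $2^{O(k)}$ of $m(W)=\sum_e\mu(W+e)$; the resulting recursion $C_k\le 2^{O(k)}\poly(k)\,C_{k-1}$ together with the trivial base case $k=1$ (the walk equals $\mu$ after one step) then yields $C_k=2^{O(k^2)}$. This inequality is precisely where approximate exchange enters, and I expect it to be the main obstacle: the normalizers $\PrX{\mu}{a\in S}$ can be arbitrarily small, so one must argue that a slot whose original element $a$ is ``rare'' contributes correspondingly little to the sum. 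The natural route is to feed the approximate exchange inequality $\mu(S)\mu(T)\le\alpha\,\mu(S-i+j)\mu(T+i-j)$, together with evaluations of the log-concavity of $g_\mu$ and its derivatives at the all-ones point $(1,\dots,1)$ — which tie together $\mu(W+a)$, $\PrX{\mu}{a\in S}=\partial_{z_a}g_\mu(1,\dots,1)$, and $m(W)=\partial^W g_\mu(1,\dots,1)$ — into a term-by-term bound costing only the factor $\alpha=2^{O(k)}$.
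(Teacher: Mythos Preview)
Your outline is sound and, once the final inequality is supplied, gives a valid alternative proof of \cref{lem:mix}; but the route is genuinely different from the paper's. The paper conditions once and for all on the entire sequence of replacement indices up to time $t$ (which fixes the last-touch times $\tau_1<\dots<\tau_k$ of every slot) and then runs an induction on $i=0,\dots,k$, all at rank $k$, showing
\[
\Pr{e_1^{\tau_1}=f_1,\dots,e_i^{\tau_i}=f_i\given \text{indices}}\;\le\;2^{O(ik)}\,\PrX{U\sim\mu}{f_1,\dots,f_i\in U}.
\]
Your induction instead decreases the rank: you peel off the last \emph{first}-touched slot (frozen at its initial element $a$), note that the remaining $k-1$ slots run the down-up walk for the link $\mu_a$, and invoke the lemma at rank $k-1$. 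Both approaches ultimately hinge on the same inequality --- your required bound $\mu(W+a)\sum_T\mu(T)\le 2^{O(k)}\bigl(\sum_{V\supset W}\mu(V)\bigr)\bigl(\sum_{U\ni a}\mu(U)\bigr)$ is precisely the $i=1$ case of the paper's inductive step --- and your reductions up to that point (passing from $\tau\le t$ to $\tau=r$, from $S_\tau$ to $W$, and from $W$ to the link walk) are all correct.

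The one real gap is that you do not prove this inequality, and your suggested route is off target: ``evaluations of the log-concavity of $g_\mu$ and its derivatives at the all-ones point'' play no role here beyond what is already encapsulated in \cref{lem:exchange}. What actually works is a one-line injection using approximate exchange. For each $T$, apply \cref{lem:exchange} to $S=W+a$, $T$, and the element $a\in S$: there is $j\in T$ with $\mu(W+a)\mu(T)\le 2^{O(k)}\mu(W+j)\mu(T+a-j)$. The pair $(V,U)=(W+j,\,T+a-j)$ satisfies $V\supset W$ and $a\in U$, so it appears on the right-hand side, and the map $T\mapsto(V,U)$ is injective because $j$ is recovered as the unique element of $V\setminus W$ and then $T=U-a+j$. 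Summing over $a\in S_0$ costs an extra factor $k$, absorbed into $2^{O(k)}$, and the recursion $C_k\le 2^{O(k)}C_{k-1}$ closes as you intended.
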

Note that without $2^{O(k^2)}$, the r.h.s.\ is simply the stationary distribution. So this statement can be understood to say that as long as we have replaced each element at least once, we cannot be too far off from the stationary distribution.

Before proving \cref{lem:mix}, let us see finish the proof of \cref{thm:main} assuming it.

\begin{proof}[Proof of \cref{thm:main} assuming \cref{lem:mix}] 
	Note that for any fixed time $t$, we can simply bound $\Pr{\tau> t}$ by $k(1-1/k)^t\leq ke^{-t/k}$. In particular this probability rapidly converges to $0$ after about $k\log k$ steps.
	
	Now let $t_1<t_2$ be two time indices. Let $\nu_t$ denote the distribution of the state of random walk, i.e., $S_t$, at time $t$. Our goal is to bound $\norm{\nu_t-\mu}_\tv$, where for simplicity of notation, we assume $\mu$ is properly normalized to be a probability distribution. Let $\nu_t'$ be the distribution of $S_t$ conditioned on $\tau\leq t$, and let $\nu_t''$ be the distribution of $S_t$ conditioned on $\tau\geq t$. Then we can write
	\[ \nu_{t_1}=\Pr{\tau\leq t_1}\cdot \nu_{t_1}'+\Pr{\tau>t_1}\cdot \nu_{t_1}''. \]
	If $P$ denotes the random walk operator, then note that $\nu_{t_2}=\nu_{t_1}P^{t_2-t_1}$. So we get
	\[ \nu_{t_2}=\Pr{\tau\leq t_1}\nu_{t_1}'P^{t_2-t_1}+\Pr{\tau>t_1}\nu_{t_1}''P^{t_2-t_1}. \]
	Using the triangle inequality we can bound
	\[ \norm{\nu_{t_2}-\mu}_\tv \leq \norm{\nu_{t_1}'P^{t_2-t_1} - \mu}_\tv+\Pr{\tau>t_1}. \]
	Here we used the fact that $\Pr{\tau\leq t_1}\leq 1$, and $\norm{\nu_{t_1}''P^{t_2-t_1}-\mu}_\tv\leq 1$; the latter inequality is because $\norm{\cdot }_\tv$ is always upper bounded by $1$.
	
	We can bound the second term in the above inequality by $ke^{-t_1/k}$ as stated before. For the first term, note that the KL-divergence between $\nu_{t_1}'$ and $\mu$ is at most $O(k^2)$ by \cref{lem:mix}. This is because
	\[ \D{\nu_{t_1}'\mid \mu}=\ExX*{S\sim \nu_{t_1}'}{\log \frac{\nu_{t_1}'(S)}{\mu(S)}}\leq \log(2^{O(k^2)})=O(k^2). \]

	 So by \cref{lem:mlsi} in $t_2-t_1$ steps this KL-divergence decreases to $(1-1/k)^{t_2-t_1}O(k^2)=O(k^2e^{-(t_2-t_1)/k})$. By Pinsker's inequality, \cref{prop:pinsker}, we get that
	\[ \norm{\nu_{t_1}'P^{t_2-t_1}- \mu}_\tv\leq O(ke^{-(t_2-t_1)/2k}). \]
	So in the end we get the following bound
	\[ \norm{\nu_{t_2}-\mu}_\tv\leq O(ke^{-(t_2-t_1)/2k}+ke^{-t_1/k}). \]
	In order for this to be at most $\epsilon$, it is enough to make sure that $\min\set{t_1, t_2-t_1}=\Omega(k\log k+k\log \frac{1}{\epsilon})$. So we can simply let $t_1=t_2/2$, and then make sure that $t_2=\Omega(k\log (k/\epsilon))$.
\end{proof}

As the main tool we use to prove \cref{lem:mix}, we introduce a new inequality for log-concave polynomials, that we call approximate exchange. We state the inequality below and defer its proof to \cref{sec:exchange}.
\begin{lemma}\label{lem:exchange}
	Any $\mu:\binom{[n]}{k}\to \R_{\geq 0}$ with a log-concave generating polynomial satisfies a $2^{O(k)}$-exchange property. That is, for every $S, T\in \binom{[n]}{k}$ and $i\in S$ there exists $j\in T$ such that
	\[ \mu(S)\mu(T)\leq 2^{O(k)} \mu(S-i+j)\mu(T+i-j). \]
\end{lemma}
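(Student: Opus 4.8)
The plan is to reduce to a clean configuration, extract from the hypotheses the one new inequality that is needed --- an Alexandrov--Fenchel-type bound for log-concave polynomials --- and then telescope it $O(k)$ times.

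\emph{Reductions.} If $i \in T$, take $j = i$: then $S - i + j = S$ and $T + i - j = T$, so the claim holds with constant $1$; hence assume $i \notin T$. If $A := S \cap T \neq \emptyset$, replace $g_\mu$ by $\bigl(\prod_{m \in A} \partial_m\bigr) g_\mu$, which is still log-concave by \cref{lem:clc} and is the generating polynomial of $\mu'(V) := \mu(V \cup A)$ on $\binom{[n] \setminus A}{k - \card{A}}$; an approximate exchange inequality for $(\mu', S \setminus A, T \setminus A, i)$ --- with a constant depending only on $k - \card{A} \le k$ --- immediately yields one for $(\mu, S, T, i)$ with the same constant. So assume $S \cap T = \emptyset$. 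Finally, setting $z_m = 0$ for all $m \notin S \cup T$ is composition with a nonnegative linear map, so it preserves log-concavity while fixing every coefficient indexed by a subset of $S \cup T$; hence assume $n = 2k$ and $[n] = S \sqcup T$. I also record that $\supp(\mu)$ is an M-convex family --- the set of bases of a matroid, a standard consequence of complete log-concavity --- so strong basis exchange already provides some $j \in T$ with $\mu(S - i + j) > 0$ and $\mu(T + i - j) > 0$.

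\emph{The Alexandrov--Fenchel inequality.} Since $g := g_\mu$ is homogeneous of degree $k$, the quantity $D(v_1, \dots, v_k) := \tfrac{1}{k!} \partial_{v_1} \cdots \partial_{v_k} g$ is a constant; $D$ is symmetric, multilinear, monotone and nonnegative in each argument over $\R^n_{\ge 0}$ (being a nonnegative combination of values of $\mu$), and --- since $g$ is multiaffine --- $D(\1_U, \dots, \1_U) = \mu(U)$ for every $U \in \binom{[n]}{k}$. The inequality I would prove is that for all $u, v, v_3, \dots, v_k \in \R^n_{\ge 0}$,
\[ D(u, v, v_3, \dots, v_k)^2 \;\ge\; D(u, u, v_3, \dots, v_k) \cdot D(v, v, v_3, \dots, v_k). \]
Indeed $q := \partial_{v_3} \cdots \partial_{v_k} g$ is a homogeneous quadratic with nonnegative coefficients, log-concave by \cref{lem:clc} (if $q \equiv 0$ the inequality is trivial); by \cref{prop:oneeig} its constant Hessian $M$ has at most one positive eigenvalue, and as $q(\1_{[n]}) > 0$ it has exactly one. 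The symmetric $2 \times 2$ matrix with diagonal entries $u^\top M u$, $v^\top M v$ and off-diagonal entry $u^\top M v$ therefore has nonnegative entries (as $u, v \ge 0$ and $M$ has nonnegative entries) and at most one positive eigenvalue (a compression of $M$ cannot have more positive eigenvalues than $M$), so its determinant is nonpositive --- which, after dividing by $(k!)^2$, is exactly the displayed inequality. A standard corollary: for fixed $x, y \in \R^n_{\ge 0}$ the sequence $D_a := D(x^{(a)}, y^{(k-a)})$ --- $a$ copies of $x$ and $k - a$ of $y$ --- is log-concave, so $D_0 D_k \le D_a D_{k-a}$ for every $a$.

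\emph{Telescoping, and the main obstacle.} Applying the corollary with $(x, y) = (\1_S, \1_T)$ gives the first step, $\mu(S) \mu(T) = D_0 D_k \le D_1 D_{k-1}$, where $D_{k-1}, D_1$ are (up to the factor $1/k$) sums of $\mu$ over one-element exchanges: $D_{k-1}$ contains the term $\mu(S - i + t)$ for each $t \in T$, and $D_1$ contains $\mu(T + i - t)$. The plan from here is to continue reducing --- peeling one coordinate at a time and always routing the bound through exchanges involving the distinguished element $i$, so as to descend toward a single product $\mu(S - i + j) \mu(T + i - j)$ for a well-chosen $j$ --- while conceding only an absolute multiplicative constant at each of the $O(k)$ steps; the loss at a step comes from bounding the multiset ``cross'' forms $D(\cdot, \cdot, \ast)$ that appear against the geometric mean of their neighbours, using $\1_S, \1_{S - i + j} \le \1_{[n]}$ and monotonicity of $D$, and the freedom to choose $j$ among valid exchanges so that no intermediate form degenerates. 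Multiplying $O(k)$ constant losses gives $2^{O(k)}$. I expect this telescoping --- setting up the hybrid so that the distinguished element is tracked correctly and every Alexandrov--Fenchel application costs only $O(1)$, including choosing $j$ via strong basis exchange inside $\supp(\mu)$ so that nothing vanishes prematurely --- to be the main obstacle. Some exponential loss is in any case intrinsic: \textcite{BH18} show that $1$-approximate exchange, i.e.\ $M^\natural$-concavity of $\log \mu$, is precisely what forces $g_\mu$ to be log-concave, so the constant cannot be taken to be $1$.
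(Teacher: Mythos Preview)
Your reductions (conditioning on $S\cap T$, restricting to $S\cup T$) are correct and match the paper exactly, and your Alexandrov--Fenchel inequality for the polarization $D$ is valid; in fact the paper's $k=2$ base case is precisely your AF inequality specialized to quadratics (it computes $\det(\nabla^2 g_\mu)\le 0$ on four variables and extracts $\sqrt{A}\le \sqrt{B}+\sqrt{C}$, i.e.\ a $4$-approximate exchange).

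The genuine gap is the ``telescoping'' step, which you flag as the obstacle but do not carry out. The difficulty is not merely bookkeeping. Your first move $\mu(S)\mu(T)\le D_1 D_{k-1}$ loses the distinguished element entirely: $D_{k-1}=\tfrac{1}{k}\sum_{s\in S,\,t\in T}\mu(S-s+t)$ sums over exchanges of \emph{every} $s\in S$, not just $s=i$, and likewise $D_1$. Nothing in the AF/log-concave-sequence machinery singles out $i$, so you need an additional idea to route the bound back through an $i$-exchange while paying only $O(1)$ per step. Your sketch (``monotonicity of $D$'', ``choose $j$ via strong basis exchange so nothing vanishes'') does not supply this idea; monotonicity goes the wrong way for the purpose of dropping unwanted terms on the right-hand side, and positivity in $\supp(\mu)$ gives no quantitative control.

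The paper resolves exactly this asymmetry by a different induction. It picks an auxiliary $i'\in S\setminus\{i\}$ and chooses $j'\in T$ to \emph{maximize} $\mu(T+i-j')\,\mu(S-i'+j')$; this choice is the key. Applying the inductive hypothesis to $(S-i'+j',\,T)$, which now intersect, yields some $j\in T$, and then the $k=2$ case applied to $(S,\,S-i-i'+j+j')$ produces a two-term maximum. In one branch the factors cancel directly to give the $i\leftrightarrow j$ exchange; in the other branch the maximizing choice of $j'$ is exactly what bounds the leftover ratio by $1$, giving the $i\leftrightarrow j'$ exchange. Each step costs $2^{O(1)}$, and the recursion depth is $k$, yielding $2^{O(k)}$. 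This ``choose $j'$ to win the bad case'' trick is the missing ingredient in your plan.

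A minor point: your closing remark misreads \textcite{BH18}. That $M^\natural$-concavity implies log-concavity of $g_\mu$ says nothing about whether the reverse implication could hold with constant $1$; it does fail, but not for the reason you give.
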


Armed with \cref{lem:exchange}, let us prove \cref{lem:mix}.
\begin{proof}[Proof of \cref{lem:mix}]	
	Let's look at the down-up walk process \emph{with orders}. This means that we start with some elements $e_1,\dots,e_k$ that together form the starting set. In each time step we replace one of the $e_i$'s. But we keep track of the ordering and do not convert these to sets. So we can talk about $e_i^t$ as the $i$-th element at time $t$. In particular $S_t$ is simply the \emph{unordered} collection $\set{e_1^t,\dots,e_k^t}$. Let's say that $X=\set{f_1,\dots,f_k}$. Then to have $S_t=X$, there must be some permutation of $f_1,\dots,f_k$ that equals $e_1^t,\dots,e_k^t$. We will show that for any such permutation the promised bound in \cref{lem:mix} holds. Since there are $k!=2^{O(k\log k)}$ many permutations, this extra factor of $k!$ can be absorbed into the factor of $2^{O(k^2)}$ without any loss. So we fix an arbitrary permutation, w.l.o.g.\ the identity permutation, and try to bound the following
	\[ \Pr{e_1^t=f_1,\dots,e_k^t=f_k\given \tau\leq t}. \]
	Since we are conditioning on $\tau\leq t$, note that there must be some time $\tau_i\leq t$, which is the last time before $t$ where the $i$-th element gets replaced by the down-up random walk. We will bound the above probability, even conditioned on $\tau_1,\dots,\tau_k$ having any set of fixed values up to $t$. Note that the index of the element that gets replaced in every step is uniformly random and independent of everything else that happens in the random walk, in particular the identity of the elements that come in as replacements. In the rest of the proof, we condition on the indices of the elements that get replaced at every step up to time $t$; note that this also uniquely determines $\tau_1,\dots, \tau_k$, so we assume $\tau_1,\dots,\tau_k$ are some fixed time indices. W.l.o.g.\ assume that $\tau_1<\tau_2<\dots<\tau_k$. We will use induction to prove the following statement for $i=0,\dots, k$:
	\[ \Pr{e_1^{\tau_1}=f_1,e_2^{\tau_2}=f_2,\dots,e_i^{\tau_i}=f_i\given \text{replacement indices}}\leq 2^{O(ik)}\PrX{U\sim \mu}{f_1,\dots,f_i\in U}. \]
	Notice that for $i=0$, both sides are trivially equal to $1$, and for $i=k$, this inequality is the main statement we want to prove. 
	
	It remains to show the inductive step. We will show that going from $i-1$ to $i$, the l.h.s.\ gets multiplied by a smaller quantity compared to the r.h.s. If we have below inequality in hand, then it is not hard to see that we can complete the induction, since the factors that get multiplied on each side are the two sides of this inequality.
	\[ \Pr{e_i^{\tau_i}=f_i\given e_1^{\tau_1}=f_1,\dots,e_{i-1}^{\tau_{i-1}}=f_{i-1}\text{ and replacement indices}}\leq 2^{O(k)}\PrX{U\sim \mu}{f_i\in U\given f_1,\dots,f_{i-1}\in U}. \]
	Instead of conditioning only on $f_1,\dots,f_{i-1}$ being chosen at the appropriate times on the l.h.s., we will refine the conditioning and condition on the \emph{history of the random walk} up to time $\tau_i-1$. This means we can in particular assume that the elements $e_{i+1}^{\tau_i},\dots,e_k^{\tau_i}$ are fixed, that $e_1^{\tau_i}=f_1,\dots,e_{i-1}^{\tau_i}=f_{i-1}$, and the only uncertain thing is what the $i$-th element is being replaced by at time $\tau_i$.
	
	Let $S=\set{f_1,\dots,f_i, e_{i+1}^{\tau_i},\dots,e_k^{\tau_i}}$. Then the conditional probability of choosing $f_i$ at time $\tau_i$ is:
	\[ \Pr{e_i^{\tau_i}=f_i\given e_1^{\tau_1}=f_1,\dots,e_{i-1}^{\tau_{i-1}}=f_{i-1}\text{ and replacement indices}} = \frac{\mu(S)}{\sum_{V\supset S-f_i}\mu(V)}. \]
	On the other hand 
	\[ \PrX{U\sim \mu}{f_i\in U\given f_1,\dots,f_{i-1}\in U}=\frac{\sum_{U\ni f_1,\dots,f_i}\mu(U)}{\sum_{T\ni f_1,\dots,f_{i-1}} \mu(T)} \]
	So we have to show the following:
	\[ \mu(S)\parens*{\sum_{T\ni f_1,\dots,f_{i-1}} \mu(T)}\leq 2^{O(k)}\parens*{\sum_{V\supset S-f_i}\mu(V)}\parens*{\sum_{U\ni f_1,\dots,f_i} \mu(U)}. \]
	We will give an injection from the terms on the l.h.s.\ to the terms in the expanded form of the r.h.s. Choose some set $T\ni f_1,\dots,f_{i-1}$. Apply \cref{lem:exchange} to $S$ and $T$ with the element $f_i\in S$. We get that there must be some element $e\in T$ such that
	\[ \mu(S)\mu(T)\leq 2^{O(k)}\mu(S-f_i+e)\mu(T+f_i-e). \]
	Note that $V:=S-f_i+e$ contains $S-f_i$, and $U:=T+f_i-e$ contains $\set{f_1,\dots,f_i}$. So $\mu(U)\mu(V)$ appears on the r.h.s.\ of the desired inequality. So for each $T$ appearing on the l.h.s.\ of the desired inequality we produced a pair of $U$ and $V$. Note that this mapping from $T$ to $(X,Y)$ is injective. This is because given $(X,Y)$, we can recover $T$ as the xor/symmetric difference of the other three sets, that is $T=S\Delta U\Delta V$.
\end{proof}

	\section{Approximate Exchange Property}
\label{sec:exchange}

In this section we prove \cref{lem:exchange}. In addition, we show other variant(s) of approximate exchange property (\cref{lemma:rsExchange}) with implication for the approximation guarantee of local search and greedy on the problem of $\max_{S} \mu(S)$ given a log-concave/strongly-Rayleigh distribution $\mu:\binom{[n]}{k}\to\R_{\geq 0}$ (\cref{lem:greedy,lem:localsearch}).

\begin{definition}
	We say that $\mu:\binom{[n]}{k}\to\R_{\geq 0}$ has an $\alpha$-approximate exchange property, or $\alpha$-exchange for short, if for every $S, T\in \binom{[n]}{k}$ and every $i\in S$, there exists $j\in T$ such that
	\[ \alpha\cdot \mu(S-i+j)\mu(T+i-j)\geq \mu(S)\mu(T). \]
\end{definition}
Note that if $\mu$ is the indicator of bases of a matroid, then it has a $1$-exchange property, also known as the strong basis exchange property \cite{Oxl06}.

Although we do not directly need it, we give another example where approximate exchange can be proven by elementary means. This is the class of $k$-determinantal point processes \cite{BBL09, KT12}.
\begin{proposition} \label{prop:kDPP}
	Suppose that $\mu:\binom{[n]}{k}\to\R_{\geq 0}$ is defined as
	\[ \mu(S)=\det([v_i]_{i\in S})^2, \]
	for some vectors $v_1,\dots,v_n\in \R^k$. Then $\mu$ has a $k^2$-exchange property.
\end{proposition}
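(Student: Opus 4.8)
The plan is to reduce to an elementary identity coming from Cramer's rule. First I would dispose of the trivial cases: if $\mu(S)=0$ or $\mu(T)=0$ there is nothing to prove, and if $i\in T$ one may take $j=i$, so that $S-i+j=S$ and $T+i-j=T$. Hence assume $\mu(S),\mu(T)>0$ — equivalently that $\{v_s\}_{s\in S}$ and $\{v_t\}_{t\in T}$ each form a basis of $\R^k$ — and that $i\notin T$.

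Next I would express the single vector being swapped out, $v_i$, in the basis indexed by $T$, writing $v_i=\sum_{j\in T}c_j v_j$. By Cramer's rule each coefficient is a ratio of $k\times k$ determinants, namely $c_j=\pm\det(V_{T+i-j})/\det(V_T)$, where $V_A$ denotes the matrix whose columns are $\{v_a\}_{a\in A}$; the signs and the choice of column order are immaterial since everything is squared at the end. Substituting this expansion into $\det(V_S)$ and using multilinearity of the determinant in the column $v_i$ gives $\det(V_S)=\sum_{j\in T}c_j\det(V_{S-i+j})$, where a term vanishes whenever $j\in S-i$ (a repeated column). Multiplying through by $\det(V_T)$ yields the identity
\[ \det(V_S)\det(V_T)=\sum_{j\in T,\ j\notin S-i}\pm\,\det(V_{S-i+j})\,\det(V_{T+i-j}). \]

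Finally, I would take absolute values, apply the triangle inequality, and observe that the sum has at most $k$ nonzero terms; hence there is some $j\in T$ with $\lvert\det(V_{S-i+j})\det(V_{T+i-j})\rvert\geq \frac1k\lvert\det(V_S)\det(V_T)\rvert$. Squaring this gives $k^2\,\mu(S-i+j)\,\mu(T+i-j)\geq \mu(S)\mu(T)$, which is exactly the $k^2$-exchange property.

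I do not expect a genuine obstacle here. The only care needed is the bookkeeping of signs and column orderings in the determinant identity (all killed by squaring) and the observation that the degenerate terms, where $j\in S-i$ makes $S-i+j$ fail to be a $k$-set, contribute zero and may simply be dropped. One could alternatively phrase the argument through the Cauchy--Binet formula, or note that it is the approximate, multiplicative shadow of the exact strong basis exchange enjoyed by the linear matroid on $v_1,\dots,v_n$, but the Cramer's-rule computation is the most self-contained route.
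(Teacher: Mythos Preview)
Your proof is correct and follows essentially the same route as the paper's: the identity you derive via Cramer's rule and multilinearity is precisely the Pl\"ucker relation the paper cites, and both arguments finish by pigeonhole, picking a $j$ whose term is at least $1/k$ of the left side and then squaring. The only cosmetic difference is that the paper first reduces to disjoint $S$ and $T$ by projecting onto the orthogonal complement of the common vectors, whereas you handle overlap implicitly by observing that terms with $j\in S-i$ vanish as repeated-column determinants.
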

\begin{proof}
	It is enough to consider the case where $S$ and $T$ are disjoint; otherwise, the problem can be reduced to lower values of $k$ by taking out the intersection, and projecting all vectors on the orthogonal complement of the space spanned by the intersection.
	
	Define the number $\beta_j$ as $\sqrt{\mu(S-i+j)\mu(T+i-j)}$ and let $\alpha$ be $\sqrt{\mu(S)\mu(T)}$. The Pl\"ucker relations for the Grassmanian \cite[see, e.g.,][]{Abe80} say that a signed sum of $\alpha$ and $\beta_j$ is zero:
	\[ \alpha+\sum_{j\in T}\pm \beta_j=0. \]
	This means that there is at least one $j$ such that $\abs{\beta_j}\geq \frac{1}{k}\alpha$, and this concludes the proof.
\end{proof}

Next we take steps to prove \cref{lem:exchange}, namely that if $\mu:\binom{[n]}{k}\to\R_{\geq 0}$ has a log-concave generating polynomial $g_\mu$, then $\mu$ has a $2^{O(k)}$-exchange property. We conjecture that a $k^{O(1)}$-exchange property should hold, but even if true, this will not improve the mixing time results in this paper beyond constants hidden in the $O(\cdot)$ notation.

Our strategy is to prove the case of $k=2$ of \cref{lem:exchange} by using log-concavity of $g_\mu$ (note that $k=1$ is trivial). We will then use an induction to prove the general case. We remark that this type of induction is a standard procedure used in many other places, such as in the context of proving Pl\"ucker relations and $M^\natural$-concavity \cite{MS18}.

Before delving into the proof, note that we can always assume $S\cap T=\emptyset$. This is because we can always condition the distribution $\mu$ on having any set of elements, and then throwing out those elements; this operation corresponds to taking partial derivatives of $g_\mu$ which results in a log-concave polynomial by \cref{lem:clc}. In particular, we can condition $\mu$ on having $S\cap T$, and then throwing out $S\cap T$ from the ground set.

\begin{proof}[Proof of \cref{lem:exchange} for $k=2$]	
	When $k=2$, we might as well assume that $n=4$, because no element outside of $S\cup T$ is important, and we can condition the distribution $\mu$ on not having those elements. This corresponds to substituting $0$ for variables outside $S\cup T$ in $g_\mu$ which preserves log-concavity.
	
	So our goal now is to show that for a log-concave quadratic polynomial in four variables
	\[ g_\mu = \sum_{\set{i, j}\in \binom{[4]}{2}} \mu(\set{i, j})z_iz_j, \]
	we have an $O(1)$-exchange property. W.l.o.g.\ assume that $S=\set{1,2}$ and $T=\set{3,4}$.
	
	Let us consider $\nabla^2 g_\mu$. This is a constant matrix, which has at most one positive eigenvalue by \cref{prop:oneeig}. On the other hand it is a matrix with nonnegative entries, so it must have at least one nonnegative eigenvalue as well. Analyzing the possible signs of the eigenvalues, we see that their product, i.e., the determinant is nonpositive:
	\[ \det(\nabla^2 g_\mu)\leq 0. \]
	This determinant can be written in a special way. Let us define:
	\[ A:=\mu(\set{1,2})\mu(\set{3,4}), \]
	\[ B:=\mu(\set{1,3})\mu(\set{2,4}), \]
	\[ C:=\mu(\set{1,4})\mu(\set{2,3}). \]
	Notice that approximate exchange for $S, T$ any any $i\in S$ is equivalent to saying that $A\leq O(1)\cdot \max\set{B, C}$.
	We can write $\det(\nabla^2 g_\mu)=A^2+B^2+C^2-2(AB+AC+BC)$. So we get the inequality 
	\[A^2+B^2+C^2\leq 2(AB+AC+BC).\]
	This is the same as
	\[ (A-B-C)^2\leq 4BC. \]
	Taking square-roots we get
	\[ A-B-C\leq 2\sqrt{BC}, \]
	which is the same as saying
	\[ A\leq (\sqrt{B}+\sqrt{C})^2. \]
	Taking square-roots again we get
	\[ \sqrt{A}\leq \sqrt{B}+\sqrt{C}. \]
	In particular one of $\sqrt{B}$ and $\sqrt{C}$ must be at least $\frac{1}{2}\sqrt{A}$. This proves that $\mu$ satisfies a $2^2=4$-approximate exchange property for $S=\set{1,2}$ and $T=\set{3,4}$.
\end{proof}

We now complete the proof by inducting on $k$.
\begin{proof}[Proof of \cref{lem:exchange} for the general case]
	We can assume that for any $S, T$ such that $\card{S\cap T}\geq 1$, we have a $2^{O(k-\card{S\cap T})}$-approximate exchange property. This is because by the arguments we had, such nonempty intersections can be reduced to smaller values of $k$ by conditioning and throwing out $S\cap T$.
	
	Now let $S\cap T=\emptyset$ and let $i\in S$ be given. Our goal is to find $j$ such that
	\[ \mu(S)\mu(T)\leq 2^{O(k)} \mu(S-i+j)\mu(T+i-j). \]
	
	Let $i'\neq i$ be another, arbitrary, element of $S$. We will exchange $i'$ with an element $j'\in T$ and use induction on $S-i'+j'$ and $T$. We need to be careful how we choose $j'$ though. Let us choose $j'$ to be the element of $T$ that maximizes the expression $\mu(T+i-j')\mu(S-i'+j')$. The reason for this choice will become apparent in the rest of he proof.
	
	Then the sets $S-i'+j'$ and $T$ have an intersection of one element, so by induction we know an approximate exchange property for them. Therefore, there must be a $j\in T$ such that
	\begin{equation}\label{eq:eq1}
		\mu(S-i'+j')\mu(T)\leq 2^{O(k-1)}\mu(S-i-i'+j+j')\mu(T+i-j).
	\end{equation} 
	We will apply approximate exchange a second time. The sets $S$ and $S-i-i'+j+j'$ have a very large intersection. In particular their exchange property reduces to the case of $k=2$ of \cref{lem:exchange}, which we have already proven. By this exchange property, we have
	\begin{equation}\label{eq:eq2}
	 \mu(S)\mu(S-i-i'+j+j')\leq 2^{O(1)} \max\set*{\mu(S-i+j)\mu(S-i'+j'), \mu(S-i+j')\mu(S-i'+j) }.
	\end{equation}
	If the first term in \cref{eq:eq2} achieves the maximum, then we are done, because multiplying \cref{eq:eq1,eq:eq2} yields
	\begin{multline*} \mu(S-i'+j')\mu(T)\mu(S)\mu(S-i-i'+j+j')\leq \\ 2^{O(k)} \mu(S-i-i'+j+j')\mu(T+i-j)\mu(S-i+j)\mu(S-i'+j'), \end{multline*}
	which simplifies to 
	\[ \mu(S)\mu(T)\leq 2^{O(k)}\mu(S-i+j)\mu(T+i-j), \]
	showing that $i$ can be exchange for $j$.
	
	So assume that the second term in \cref{eq:eq2} achieves the maximum. We will show that in this case $i$ can be exchanged for $j'$. Multiplying \cref{eq:eq1,eq:eq2} yields
	\begin{multline*} \mu(S-i'+j')\mu(T)\mu(S)\mu(S-i-i'+j+j')\leq \\ 2^{O(k)} \mu(S-i-i'+j+j')\mu(T+i-j)\mu(S-i+j')\mu(S-i'+j), \end{multline*}
	which simplifies to
	\[ \mu(S)\mu(T)\leq 2^{O(k)}\cdot \mu(S-i+j')\mu(T+i-j')\frac{\mu(T+i-j)\mu(S-i'+j)}{\mu(T+i-j')\mu(S-i'+j')} \]
	Notice that by our choice of $j'$, the fraction appearing on the r.h.s.\ is $\leq 1$. So we can conclude that
	\[ \mu(S)\mu(T)\leq 2^{O(k)}\mu(S-i+j')\mu(T+i-j'). \]
	\end{proof}
	If we require the stricter assumption that $\mu$ is generated by a real-stable polynomial, then we get a $k^2$-exchange. This is a generalization of \cref{prop:kDPP}.
	\begin{lemma} \label{lemma:rsExchange}
		Consider $\mu:\binom{[n]}{k}\to \R_{\geq 0}$ that is generated by a real-stable polynomial.  For every $S, T\in \binom{[n]}{k}$ and $i\in S\setminus T$ 
		\begin{equation}\label{ineq:rsExchange1}
		    \sqrt{\mu(S) \mu(T)} \leq  \sum_{j\in T\setminus S} \sqrt{\mu(S-i+j)\mu(T+i-j)}
		\end{equation}
		Consequently, there exists $j\in T\setminus S$ such that
		\begin{equation}\label{ineq:rsExchange2}
		    \mu(S)\mu(T)\leq k^2 \mu(S-i+j)\mu(T+i-j)
		\end{equation}
		Thus $\mu$ satisfies a $k^2$-exchange property.
Moreover, for $ S\in \binom{[n]}{k}$ and $j\not \in S$,
    \begin{equation} \label{ineq:rsExchange3}
    \mu(S) \mu(j) \leq k\sum_{e \in S} \mu(S+j -e) \mu(e)
\end{equation}
where $\mu(t) = \sum_{T\in \binom{[n]}{k}: t \in T}\mu(T) $ for $t \in \set*{j, e}.$
	\end{lemma}
	We need the following theorem about single variate Hurwitz stable polynomial, due to \cite{asner70}.
	\begin{theorem}
	Consider Hurwitz-stable polynomial $f(z) = \sum_{i=0}^n a_i z^i$ with $a_i \geq 0 \forall i.$ Its Hurwitz matrix $H = (h_{ij}) \in \R^{n\times n}$ is defined by $h_{ij} = a_{2j-i}$ for $0 \leq 2j-i \leq n$, otherwise $h_{ij}=0.$ $H$ is totally nonnegative, in the sense that all its minors are nonnegative.
	\end{theorem}
	As an immediate consequence, we obtain the following lemma about coefficients of single variate Hurwitz stable polynomial.
	\begin{lemma}\label{lemma:hurwitzCoeff}
	Consider Hurwitz-stable polynomial $f(z) = \sum_{i=0}^{2t-1} a_i z^i$ with $a_i\geq 0 \forall i.$ Then $a_{2t-1} a_0 \leq a_{2t-2} a_1. $  
	\end{lemma}
    	\begin{proof}
    	    By total-nonnegativivity of the Hurwitz matrix $H$, we have 
    	    \[\det\begin{bmatrix}
    	    h_{1,1} & h_{1, t}\\
    	    h_{2,1} & h_{2, t}
    	    \end{bmatrix} = \det \begin{bmatrix}
    	    a_1 & a_{2t-1}\\
    	    a_0 & a_{2t-2}
    	    \end{bmatrix} = a_1 a_{2t-2} - a_0 a_{2t-1}  \geq 0
\]    	
\end{proof}
We are ready to prove \cref{lemma:rsExchange}. The idea is to construct a Hurwitz stable polynomial whose coefficients correspond to the LHS and RHS of \eqref{ineq:rsExchange1}, then use \cref{lemma:hurwitzCoeff} to derive \eqref{ineq:rsExchange1}.
	\begin{proof}[Proof of \cref{lemma:rsExchange}]
	We first show that \eqref{ineq:rsExchange1} implies \eqref{ineq:rsExchange2} and \eqref{ineq:rsExchange3}. Indeed,
	\[\sqrt{\mu(S) \mu(T)} \leq  \sum_{j\in T\setminus S} \sqrt{\mu(S-i+j)\mu(T+i-j)} \leq k \max_{j \in T \setminus S} \sqrt{\mu(S-i+j)\mu(T+i-j)} \]
	For $j\not\in S$ and $T\in \binom{[n]}{k}$ containing $j$, using \eqref{ineq:rsExchange1} and Holder's inequality, we get
	\[ \mu(S) \mu(T) \leq \left(\sum_{e\in S\setminus T} \sqrt{\mu(S-e+j)\mu(T+e-j)}\right)^2\leq k \sum_{e\in S\setminus T} \mu(S-e+j)\mu(T+e-j) \]
	Summing over all such $T$, while observing that $\sum_{T\in \binom{[n]}{k}: j \in T} \mu(T+e-j) \leq \mu(e)$, gives
	\[\mu(S) \mu(j) =\sum_{T\in \binom{[n]}{k}: j \in T}\mu(S) \mu(T) \leq  k \sum_{T\in \binom{[n]}{k}: j \in T} \sum_{e\in S\setminus T} \mu(S-e+j)(\mu(T+e-j)\leq k \sum_{e\in S} \mu(S-e+j)\mu(e)\]
	    Any $\mu:\binom{[n]}{k}\to \R_{\geq 0}$ with a real-stable generating polynomial can be approximated by a strictly real stable $\tilde{\mu}: \binom{[n]}{k} \to \R_{>0}. $, in the sense that $\abs{\tilde{\mu} (S) - \mu(S)} < \epsilon$ where $
	   \epsilon$ can be made arbitrarily small. This statement appears in \cite{Nui68} and \cite[Prop. 2.2]{BH19}. 
	    We can prove the lemma for $\tilde{\mu}$ then take $\epsilon \to 0$ to get the corresponding inequality for $\mu$. Thus, we can assume $\mu(S) > 0 \forall S \in \binom{[n]}{k}.$
	   
	    We deal with the case when $S\cap T = \emptyset$ and $[n] = S \cup T.$ Other cases can be reduced to this scenario by setting $z_i$ to $0$ for $i\not \in S \cup T$, and taking derivative with respect to $i\in S\cap T.$ Let $t: =\abs{S} = \abs{T}$ then $n=2t$. We can rewrite $f$ as
$$f(z_1, \cdots, z_{2t}) = \sum_{W \in \binom{[2t]}{t}} \mu(W) z_W$$ with $\mu(W) > 0 \forall W.$ 

For $W \in \binom{[2t]}{t}$ let $\Delta(W) = \sqrt{\mu(W) \times \mu([2t] \setminus W)}.$

Fix $e\in T$. We want to show
\[\sum_{i\in S} \Delta(S +e -i) \geq \Delta(S). \]

Since $f$ is homogeneous and real stable, it is also Hurwitz stable.

In $f$, set 
\begin{enumerate}
    \item $z_{e} = 1,$
    \item $\forall i\in S: z_i = z^{-1}\delta_i$ with $\delta_i = \sqrt{\frac{\mu(S+e-i)}{\mu(T-e+i)} }, $ 
    \item Let $B =  \prod_{i\in S} \delta_i > 0.$
    $\forall j \in T\setminus e: z_j = z B^{\frac{1}{t-1}}.$
\end{enumerate}
 and multiply $f$ by $B^{-1} z^t$, we obtain Hurwitz stable $\tilde{f}(z)$ with positive coefficients and degree $2t-1.$ 
  We rewrite $\tilde{f}(z) = b_0 z^{2t-1} + b_1 z^{2t-2} + \cdots + b_{2t-2} z_1 + b_{2t-1}. $
  
 Note that the monomial $z^W \mu(W) $ in $f$ contributes to $b_{2t-2}$ iff $\abs{S \Delta W} = 2$ and $e\in W$ i.e. $W = S-i +e$ for some $i\in S.$ Similarly, $z^W \mu(W)$ contributes to $b_1$ in iff $\abs{S \Delta W} = 2t-2$ and $e \not\in W$ i.e. $W = T-e + i$ for some $i\in S.$
 
 A routine calculation gives
 \begin{align*}
      b_{2t-1}& = \mu(S) B^{-1} \prod_{i\in S} \delta_i = \mu(S)  \\
     b_0 &= \mu(T)  \prod_{j \in T-e} B^{\frac{1}{t-1}}  = \mu(T)   \\
     b_{2t-2} &= B^{-1} \sum_{i\in S} \left(\mu(S-i+e) \prod_{j\in S-i} \delta_j \right) 
     = \sum_{i\in S}\mu(S-i+e) \delta_i^{-1} = \sum_{i\in S} \Delta(S+e-i)
     \\
     b_1 &= B^{-1}\sum_{i\in S}\left( \mu(T-e+i) \delta_i \prod_{j\in T\setminus e} B^{1/(t-1)}\right)
     = \sum_{i\in S} \Delta(S+e-i) 
 \end{align*}
Since $\tilde{f}$ is Hurwitz stable with nonnegative coefficients, \cref{lemma:hurwitzCoeff} implies $b_1 b_{2t-2} \geq b_0 b_{2t-1}$ i.e.
$(\sum_{i\in S} \Delta(S+e-i))^2 \geq \Delta(S)^2$ as required.
	\end{proof}
A consequence of \cref{lemma:rsExchange} is that a natural local search and greedy algorithms for finding the maximum of $\mu: \binom{[n]}{k} \to \R_{\geq 0}$ give a $(k!)^2\simeq k^{O(k)}$-approximation of $\max\set{ \mu(S)}$, assuming $\mu$'s generating polynomial $g_{\mu}(z_1, \cdots, z_n ) = \sum_{S} \mu(S) z^S$ is real-stable/strongly Rayleigh. This generalizes similar results for determinantal $\mu$ to the class of strongly Rayleigh distributions \cite[see, e.g.,][]{KD16}, giving further evidence for the efficacy of local search and greedy methods \cite[see also][]{Fed13}.

For subset $T$ of $[n]$ of size $\leq k$, let $\mu(T) = \sum_{S \in  \binom{[n]}{k}: S \supseteq T} \mu(S).$

\begin{algorithm} 
\SetAlgoLined
Initialize $S \leftarrow \emptyset$

While $\abs{S} < k:$
Pick $i\not\in S$ that maximizes $\mu(S\cup \set*{i}),$ and update $S \leftarrow S \cup \set*{i}$ 
\caption{Greedy} \label{alg:greedy}
\end{algorithm}
\begin{lemma}\label{lem:greedy}
If $\mu$ is strongly Rayleigh, then the output $S \in \binom{[n]}{k}$ of \cref{alg:greedy} is a $(k!)^2$-approximation of  $\OPT: = \max_{T\in \binom{[n]}{k} } \mu(T).$
\end{lemma}
\begin{proof}
    W.l.o.g., assume $\mu$ is not identically $0.$
For $j\in [k],$ let $i_j$ be the element added to $S$ at the $j$-th iteration of the while loop. Let $S_0 = \emptyset$, $S_j = S_{j-1} \cup \set*{i_j}$ and $\OPT_j = \argmax_{T\in \binom{[n]}{k}: T \supseteq S_j } \mu(T).$ Note that $\mu(\OPT_0) = \OPT$ and $\OPT_k = S.$ We show by induction on $j \in [k]$ that $\mu(\OPT_{j-1}) \leq (k-j+1)^2\mu(\OPT_j) ,$ then conclude that $\OPT \leq (k!)^2 \mu(S).$

First, observe that $\mu(S_j) > 0$ for all $j\in \set*{0, \cdots, k}.$ For $j= 0$, this is trivially true since $\mu(\emptyset) = \sum_{S} \mu(S).$ For $j\geq 1,$ \[\mu(S_j ) = \max_{i\not \in S_{j-1}} \mu(S_{j-1} \cup \set*{i}) \geq \frac{1}{k} \sum_{i\not \in S_{j-1}}\mu (S_{j-1} \cup \set*{i})\geq \frac{1}{kn} \mu(S_{j-1}) > 0 \] 
Consider $\mu_{j-1}: \binom{[n]\setminus S_{j-1}}{k+1-j} \to \R_{\geq 0}$ defined by $\mu_{j-1}(T) = \mu(T \cup S_{j-1}).$ Observe that $\mu_{j-1}$ is generated by the real stable polynomial $\partial_{i_{j-1}} \cdots \partial i_1 f$ (see \cref{thm:real stable derivative}). Let $X = \OPT_{j-1} \setminus S_{j-1}.$ Apply \eqref{ineq:rsExchange1} in \cref{lemma:rsExchange} to $X$ and $i_j \not \in S_{j-1}$, we have 
\[\mu_{j-1}(X) \mu_{j-1}(i_j) \leq (k+1-j) \sum_{e \in X} \mu_{j-1}(X+i_j - e) \mu_{j-1}(e) \leq  (k+1-j)^2\mu( \OPT_j)\, \mu_{j-1}(i_j) \]
where the last inequality follows from $i_j = \argmax_{i \not\in S_{j-1}} \mu(S_{j-1} \cup\set*{i})= \argmax_{i \not\in S_{j-1}}\mu_{j-1}(i),$ $\OPT_j = \argmax_{T\in \binom{[n]}{k}: T \supseteq S_j } \mu(T)$ and the fact that $S_j \subseteq (X\cup S_{j-1}\cup \set*{i_j} \setminus \set*{e}) $ for $e\in X.$  Dividing both sides by $\mu_{j-1}(i_j) = \mu(S_{j-1} \cup \set*{i_j})> 0$ gives 
\[\mu(\OPT_{j-1}) = \mu_{j-1} (X) \leq (k+1-j)^2 \mu(\OPT_j)\]

\end{proof}
We remark that similar guarantees can be obtained for a closely related local search algorithm, which moves between sets of size $k$, each time replacing one element by another. Note that our improved exchange property for strongly Rayleigh distributions is crucial in obtaining $k^{O(k)}$-approximation. For arbitrary log-concave distributions, we can show the approximate exchange property in \cref{lemma:rsExchange} with approximation factor $2^{O(k)}$ instead of $k^2$, thus proving a $2^{O(k^2)}$-approximation guarantee for greedy. Furthermore, we show that local search yields $k^{O(k)}$-approximation. 
\begin{algorithm} 
\SetAlgoLined
Initialize $S \leftarrow S_0$ for some $S_0$ with $\mu(S_0) \neq 0$

While $\mu(S) < \alpha\cdot  \mu(S-i+j)$ for some $j\not\in S$ and $i\in S$ do:
Update $S \leftarrow \argmax_{S'\in \set*{S-i+j \given \, j\not\in S, i\in S }} \mu(S')$ 
\caption{$\alpha$-local search ($\alpha \leq 1$) }\label{alg:localsearch}
\end{algorithm}
\begin{lemma} \label{lem:localsearch}
If $\mu$ is log-concave, then the output $S \in \binom{[n]}{k}$ of \cref{alg:localsearch} is a $(k!)^2/\alpha^k$-approximation of  $\OPT: = \max_{T\in \binom{[n]}{k} } \mu(T).$
\end{lemma}
\begin{proof}
We sketch a proof for the case $\alpha=1.$ The proof for general $\alpha \leq 1$ is entirely analogous.
The main ingredient in our proof is an inequality of the form:
\[\mu(S) \mu(T) \leq \frac{1}{\abs{S \setminus T}^2}\left(\sum_{i \in S\setminus T, j \in T\setminus S} \mu(S-i+j) \right) \left(\sum_{i \in S\setminus T, j \in T\setminus S} \mu(T+i-j)\right)\]
For $S \cap T = \emptyset,$ the first inequality follows from the complete-log-concavity of the polynomial $\tilde{f}_{\mu}( y, x) = \mu(S) y^k + \left(\sum_{i\in S, j\in T} \mu(S-i+j)\right) y^{k-1} x + \cdots +\left(\sum_{i\in S, j\in T} \mu(T+i-j)\right) y x^{k-1} + \mu(T) x^k$
obtained by setting $z_i = y$ for $i\in S$, $z_j = x$ for $j \in T$ and $z_i = 0$ otherwise, in the generating polynomial $f_{\mu}$ \cite[see e.g.][]{BH19}. We can reduce the remaining cases to the case $S\cap T = \emptyset$ by taking derivative $\prod_{i\in S \cap T}\partial_i$ of $f_{\mu}.$

When $S$ is a local-maxima, we get
 \[\mu(S) \mu(T)  \leq k^2 \mu(S) \max_{i \in S\setminus T, j \in T\setminus S} \mu(T+i-j)\]
Successively apply this inequality, first with $T = T_0 :=\arg\max_{T'} \mu(T')$ then with \[T_{\ell}: = \argmax_{i \in S\setminus T_{\ell-1}, j \in T_{\ell-1}\setminus S} \mu(T_{\ell-1}+i-j) \text{ for } \ell \geq 1 \]
Note that $\abs{T_{\ell} \setminus S}$ strictly decrease in each iteration, so we get $T_k = S$, and \[\mu(T_0) \leq k^2 \mu(T_1) \leq k^2 (k-1)^2 \mu(T_2)\leq  \cdots \leq (k!)^{2} \mu(S) \]

\end{proof}

	
\section{Sampling Forests/Spanning Trees}
\label{sec:strees}
In this section we prove \cref{thm:forests,thm:constrainedForests}.\footnote{We remark that reliance on \cref{thm:main} in this section is not mandatory and the results of this section would have been possible even without \cref{thm:main}.}

\begin{figure}
\begin{center}\begin{tikzpicture}
\pic at (-4, 0) {sptrees={ B/C, A/D, A/E, E/F, F/G}};
\begin{scope}[shift={(3.55, -0.55)}, rotate=50]
	\draw[line width=1, dashed] (0, 0) ellipse (1.2 and 0.3);	
\end{scope}
\draw[->, line width=1] (1.5, 0) to node[above, yshift=0.3em] {delete} (2.5, 0);
\pic at (0, 0) {sptrees={B/C, A/D, E/F, F/G, A/E, A/G}};
\draw[->, line width=1] (-2.5, 0) to node[above, yshift=0.3em] {add} (-1.5, 0);
\pic at (4, 0) {sptrees={B/C, A/D, E/F, F/G, A/G}};
\begin{scope}[shift={(-3.45, -0.1)}, rotate=-8]
	\draw[line width=1, dashed] (0, 0) ellipse (1 and 0.25);	
\end{scope}
\end{tikzpicture}\end{center}
\begin{center}\begin{tikzpicture}
\pic at (-4, 0) {sptrees={ B/C, A/D, A/E, E/F, F/G}};
\begin{scope}[shift={(3.55, -0.55)}, rotate=50]
	\draw[line width=1, dashed] (0, 0) ellipse (1.2 and 0.3);	
\end{scope}
\draw[->, line width=1] (1.5, 0) to node[above, yshift=0.3em] {delete} (2.5, 0);
\pic at (0, 0) {sptrees={A/B, B/C, A/D, E/F, F/G, A/E}};
\draw[->, line width=1] (-2.5, 0) to node[above, yshift=0.3em] {add} (-1.5, 0);
\pic at (4, 0) {sptrees={A/B, B/C, A/D, E/F, F/G}};
\begin{scope}[shift={(-3.55, 0.25)}, rotate=30]
	\draw[line width=1, dashed] (0, 0) ellipse (0.85 and 0.25);	
\end{scope}
\end{tikzpicture}\end{center}
\begin{center}\begin{tikzpicture}
 \pic at (-4, 0) {sptrees={ B/C, A/D, A/E, E/F, F/G}};
\draw[->, line width=1] (1.5, 0) to node[above, yshift=0.3em] {no delete} (2.5, 0);
\pic at (0, 0) {sptrees={A/B, B/C, A/D, E/F, F/G, A/E}};
\draw[->, line width=1] (-2.5, 0) to node[above, yshift=0.3em] {add} (-1.5, 0);
\pic at (4, 0) {sptrees={A/B, B/C, A/D, E/F, F/G, A/E}};
\begin{scope}[shift={(-3.55, 0.25)}, rotate=30]
	\draw[line width=1, dashed] (0, 0) ellipse (0.85 and 0.25);	
\end{scope}
\end{tikzpicture}\end{center}
\begin{center}\begin{tikzpicture}
 \pic at (-4, 0) {sptrees={ B/C, A/D, A/E, E/F, F/G}};
\begin{scope}[shift={(3.55, -0.55)}, rotate=50]
	\draw[line width=1, dashed] (0, 0) ellipse (1.2 and 0.3);	
\end{scope}
\draw[->, line width=1] (1.5, 0) to node[above, yshift=0.3em] {delete} (2.5, 0);
\pic at (0, 0) {sptrees={ B/C, A/D, E/F, F/G, A/E}};
\draw[->, line width=1] (-2.5, 0) to node[above, yshift=0.3em] {no add} (-1.5, 0);
\pic at (4, 0) {sptrees={ B/C, A/D, E/F, F/G}};
\end{tikzpicture}\end{center}
\caption{The effect of one step in the down-up random walk on $\mu^{\uparrow}$ on $T_Z = Z \setminus T.$ Four possible ways for $T_Z$ to change. 
}
\label{fig:up-down}
\end{figure}
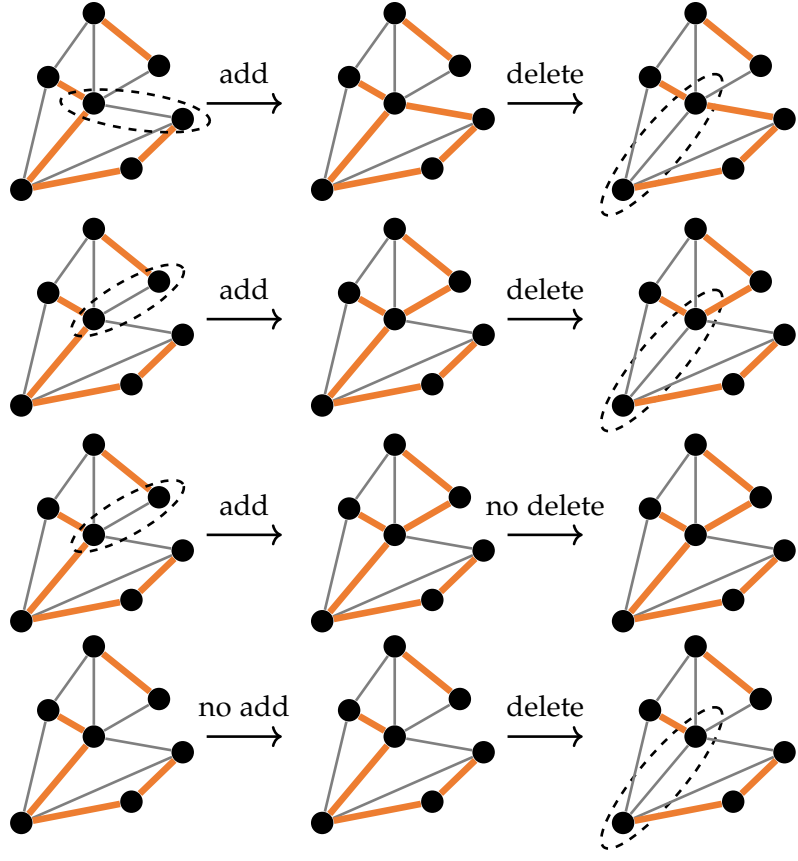

Let $\mu$ be the distribution over forests of $G$ defined in \cref{thm:forests}.
In \cref{lem:complementIndepPoly}, we show a homogeneous multiaffine log-concave polynomial $f_{\matroid,q, w}^{\uparrow}$ that generates homogeneous distribution $\mu^{\uparrow}: \binom{E\cup Y}{n} \to \R_{\geq 0}$
whose projection to $E$ is the \textit{complement} distribution of $\mu$ i.e. $\PrX{T \sim \mu^{\uparrow}}{ T \cap E} = \mu (E \setminus (T \cap E)).$ 

We then run the down-up random walk on the distribution $\mu^{\uparrow}$ for some $t$ steps, and obtain a random set $T_t \in \supp(\mu^{\uparrow}).$ We argue that the distribution of $E \setminus (T_t \cap E)$ is $\epsilon$-close to $\mu$ for some $t= O(\abs{E} \log \frac{\abs{E}}{\epsilon})$ using the mixing time bound proved in \cref{thm:main}. For completeness, we briefly discuss how to implement each step of the random walk in $O(\log \abs{E})$ time.

\begin{lemma} \label{lem:complementIndepPoly}
Let $\matroid$ be a matroid of rank $r$ over ground set $[n]$, and $I(\matroid)$ be its family of independent sets. The following polynomials are completely log-concave.
\begin{enumerate}[label = (\alph*)]
    \item\label{item:LC1} 
\[f_{\matroid}(z_0, z_1, \cdots, z_n) = \sum_{S\in I(\matroid)} z_0^{\abs{S}} z^{[n] \setminus S}\]
\item \label{item:LC2} For 
\[f_{\matroid, q, w} (z_0, z_1, \cdots, z_n)  =  \sum_{S\in I(\matroid)} q^{r - \abs{S}} w^S z_0^{\abs{S}} z^{[n] \setminus S}\]
where $q \geq 0$ and $w_1, \cdots, w_n> 0$
\item\label{item:LC3} 
\[f_{\matroid, q, w}^{\uparrow} (y_1, \cdots, y_r, z_1, \cdots, z_n) = \sum_{S\in I(\matroid), T\in \binom{[r]}{\abs{S}}} \frac{1}{\binom{r}{\abs{S}}} q^{r - \abs{S}} w^S  y^T z^{[n] \setminus S}\]
where $q \geq 0$ and $w_1, \cdots, w_n> 0$
\end{enumerate}

\end{lemma}

\begin{proof}
Since $f_{\matroid,q, w}^{\uparrow} (y_1, \cdots, y_r, z_1, \cdots, z_n) =  \prod^{\uparrow}_{\kappa} (f_{\matroid, q, w})$ with $\kappa_0 = r$ and $\kappa_i = 1\forall i\in [n]$, \cref{item:LC2} implies \cref{item:LC3} by \cref{prop:polarization}.

We show that \cref{item:LC1} implies \cref{item:LC2}. 

If $q > 0$ then
\[f_{\matroid, q, w}(z_0, z_1, \cdots, z_n)    \propto f_{\matroid}(\frac{z_0}{q}, \frac{z_1}{w_1}, \cdots, \frac{z_n}{w_n}) \]
is completely log-concave since composing with a linear map preserves complete log-concavity.

If $q=0$ then
\[f_{\matroid,q, w}(z_0, z_1, \cdots, z_n)  = \sum_{S\in B(\matroid)} w^S  z^{[n]\setminus S} \propto \frac{\partial^{r} f_{\matroid} }{\partial z_0^{r} } (\frac{z_1}{w_1}, \cdots, \frac{z_n}{w_n}) \]
is completely log-concave since taking derivative preserves complete log-concavity.  


Now we show \cref{item:LC1}. Let $f: = f_{\matroid}.$
We first show that $\supp(f)$ is $M$-convex (see Definition \cref{def:m-convex}).
    
    First, by \cref{thm:indepPoly,thm:lorentzian}, the support of $g(y, z_1, \cdots, z_n) = \sum_{S\in \mathcal{I}(\matroid)} y^{n-\abs{S}} w^{S}$ is $M$-convex. Note that 
    \[\text{supp}(f) = \set*{\vec{v}- \tilde{w} \given \tilde{w} \in \text{supp}(g)}\]
    where $v_0 = n$ and $v_i = 1 \forall i \in [n].$ Thus, $\text{supp}(f)$ is also $M$-convex.
    
    Indeed, consider any 
    $\alpha, \beta \in \text{supp}(f)$ and $i \in \{0,\cdots, n\}$ s.t. $\alpha_i < \beta_i$. Then $\vec{v}-\alpha, \vec{v}-\beta \in \supp(g)$ and $(\vec{v}-\alpha)_i > (\vec{v}-\beta)_i $ so there exists $j$ s.t. $(\vec{v}-\alpha)_j < (\vec{v}-\beta)_j$, and $(\vec{v}-\alpha) - e_i +e_j$ and $(\vec{v}-\beta) - e_j + e_i$ are in $\supp(g)$. This implies $\alpha_j > \beta_j$ and $\alpha-e_j +e_i$ and $\beta - e_i+e_j$ are in $\supp(f).$

 We proceed using induction on $n.$ Obviously, for $n=1$, $f$ is a linear function in $w_0, w_1$ with positive coefficient, so is real-stable/Lorentzian. Suppose the statement is true for matroids $\matroid'$ on ground set $[n-1]$ with $n\geq 2.$ 
 
 We only need to verify $\partial^{\alpha} f
 $ is Lorentzian/real-stable for all $\alpha$ with $\abs{\alpha} = n-2.$ Note that for $i\in [n]$, 
    \[\partial_i f   = \sum_{S \in I(\matroid): i \not\in S} z_0^{\abs{S}} z^{([n] \setminus i) \setminus S } = f_{\matroid\setminus i} \]
    is Lorentzian by applying induction hypothesis to $\matroid \setminus i.$ We only need to show $\partial_0^{n-2} f$ is real-stable. Note that $\partial_0^{n-2} f \neq 0$ only if $r =\rank(\matroid) \geq n-2.$ Also, for $n =2$, $\partial_0^{n-2}f$ is exactly $f.$
    
    For $r = n-2,$ $\partial_0^{n-2} f = (n-2)! \sum_{S \in \B(\matroid)} z^{[n] \setminus S} = (n-2)! \sum_{S' \in \B(\matroid^D)} z^{S'} $ is Lorentzian (and of degree 2, thus real-stable), since it is the sum over the bases of dual matroid $\matroid^*$ of $\matroid$ (\cref{thm:basePoly}). 
    
    For $r= n,$ then $\B(\matroid) =\{[n]\}$ and $f = \prod_{i\in [n]} (z_0+z_i)$ is real-stable as product of real stable polynomials $z_0+z_i$ for $i\in [n]$, thus so is $\partial_0^{n-2} f.$ 
    
    For $r=n-1,$ \[\partial_0^{n-2} f = (n-1)! \, z_0 \, e_1(z_T) + (n-2)! \left(e_2(z_T)+ e_1(z_T) e_1(z_{[n] \setminus T})\right )  \] where $T\subseteq [n]$ is such that $\B(\matroid) = \set*{ [n] \setminus i \given i\in T}$ are the bases of $\matroid$, $e_k$ is elementary symmetric polynomial of degree $k$, and $z_S$ is shorthand for $\set*{z_i\given i \in S}$. This is because $\set*{S \in \mathcal{I}(\matroid) \given \abs{S} = n-2}$ is exactly $\set*{[n] - t-t' \given t, t'\in T}\cup \set*{[n] - t- \bar{t} \given t \in T,\bar{t} \not\in T }.$ Set $u := (n-1) z_0 + e_1(z_{[n] \setminus T})$ then
    \[\partial_0^{n-2} f= (n-2)!\left( e_1(z_T) u + e_2(z_T)\right) = (n-2)!\, e_2(u, z_T) \]
    Note that $e_2$ is real-stable (\cref{thm:symmetric}), and that $u\in \mathcal{H}$ whenever $(z_0, z_{[n] \setminus T}) \in \mathcal{H}^{n+1 - \abs{T}},$ where $\mathcal{H}$ is the upper half plane. Thus $\partial_0^{n-2} f$ is nonzero for any $(z_0, z_{[n]}) \in \mathcal{H}^{n+1}$ i.e. $\partial_0^{n-2} f$ is real stable.
\end{proof}
\begin{remark}
Observe that $f_{\matroid}(z_0, z_1, \cdots, z_n) = \sum_{S\in I(\matroid)} z_0^{\abs{S}} z^{[n] \setminus S}$ is the \textit{dual} or \textit{complement} of Lorentzian polynomial $g_{\matroid}(z_0, z_1, \cdots, z_n) = \sum_{S\in I(\matroid)} z_0^{r-\abs{S}} z^{S}$ (see \cref{thm:indepPoly}), in the sense that
\[f_{\matroid}(z_0, z_1, \cdots, z_n) = z_0^r z^{[n]} g_{\matroid}(z_0^{-1}, z_1^{-1}, \cdots, z_n^{-1})\]
The dual of a real-stable polynomial is real-stable, but the dual of a Lorentzian polynomial is not necessarily Lorentzian.
\end{remark}
We are ready to prove \cref{thm:forests}.
\begin{proof}[Proof of \cref{thm:forests}]
Let $\matroid$ be the graphic matroid on graph $G= G(V,E)$ on $n$ edges. Let $k: = \rank(\matroid).$ W.l.o.g, we can label the edges by $1,2\cdots, n$ and assume $E=[n].$ 
Let $\mu$ be the distribution over independent sets of $\matroid$ (i.e. forest of $G$) where $\mu(F) \propto q^{k-\abs{F}} w^F$ for $F \in \mathcal{I}(\matroid)$. Note that we can remove all edge of weight $0$ from $E$ without changing $\mu$. W.l.o.g. we assume this is already done, thus $w(e) > 0 \forall e\in E.$


Let $Y : = \set*{y_1, \cdots, y_k}, Z := \set*{z_1, \cdots, z_n}.$ We identify the variable $z_i$ with the edge labeled by $i$. Let $f: = f^{\uparrow}_{\matroid, q, w}$ then $f$ is multi-affine, homogeneous, and completely log-concave by \cref{lem:complementIndepPoly}. 
Observe that $f$ is the generating polynomial for distribution $\mu^{\uparrow}: \binom{Y \cup Z}{n} \to \R_{\geq 0}$ defined by $\mu^{\uparrow}(T) \propto \frac{q^{k- \abs{T \cap Y}}}{\binom{k}{\abs{T \cap Y}} } w^{Z \setminus T}$ if $Z \setminus T \in \mathcal{I}(\matroid),$ and $0$ otherwise.  

We run the down-up walk starting from $T_0 \in \supp(\mu^{\uparrow})$ (for e.g. $T_0 = (Z\setminus \mathcal{F}) \cup Y$ for some spanning forest $\mathcal{F}$). 
Let $\nu^{\uparrow}$ be the distribution of the set $T \in \sup(\mu^{\uparrow})$ we obtained after $O(n\log(n/\epsilon))$ steps; \cref{thm:main} implies $\norm{\mu^{\uparrow}- \nu^{\uparrow}}_{\tv} \leq \epsilon.$ We then remove all $y_j$ from $S$ i.e. 
collapse $T \subseteq Y \cup Z $ to $T_Z: = Z \setminus (T\cap Z)$. 
Let $\nu$ be the distribution of $T_Z.$ Clearly, $\supp(\nu) = \mathcal{I}(\matroid),$ and if $\nu^{\uparrow}$ is the same distribution as $\mu^{\uparrow}$, then $\nu$ is the same as $\mu.$  
By the data processing inequality, the total variation between $\nu$ and $\mu$ is at most $\epsilon$ since $\norm {\mu - \nu} \leq \norm{\mu^{\uparrow}- \nu^{\uparrow}}_{\tv} \leq \epsilon.$  

We show each step of the random walk can be implemented in $O(\log n)$ time. 
In the down step, we keep track of whether the level $(n-1)$ set $S$ still satisfies $S_Z : = Z \setminus S$ is a forest. Note that if we dropped an $y_{j}$ in the down step to arrive at $S$, then $S_Z$ is always a forest; if, instead, we dropped a $z_i$ (equivalently, added $z_i$ to $S_Z$), then we can check whether $S_Z$ stays a forest in $O(\log n)$ amortized time using link-cut tree \cite{ST83,RTF18}. If $S_Z$ is not a forest, then let $\text{cycle}_S$ be the unique cycle in $S_Z$ which contains the edge $z_i$ that was added to $S_Z$. When we perform an up-step from $S$, if $S_Z$ is not a forest, select $z_f$ among the edges in $\text{cycle}_S$ with probability $\propto 1/w_f$ and add it to $S$ (equivalently, remove $z_f$ from $S_Z$). This can be done in $O(\log n)$-amortized time (see \cite{RTF18}). If $S_Z$ is empty, then we can only add $y_j$ which is not already in $S_Y:=S\cap Y$ with uniform probability. If $S_Z$ is a nonempty forest, then we can add any variable $y_{j} \not\in S_Y$ or $z_i \in S_Z$. In this case, the probability of adding variable can be explicitly computed i.e. uniform among $y_{j}\not\in S_Y$, and $\Pr{z_i} /\Pr{y_{j} } = \frac{q^{k-(\ell-1)}w_i^{-1}/\binom{k}{\ell-1}} {q^{k-\ell}/\binom{k}{\ell}} $ where $\ell: = \abs{S_Z} =\abs{S_Y} +1.$ We can perform these operations in $O(\log n)$ time by 
\begin{itemize}
    \item With probability $1/(1+\tau)$ where $\tau:=\frac{ q^{k-(\ell-1)} \sum_{i \in S_Z} w_i^{-1}/\binom{k} {\ell-1}} {q^{k-\ell} (k-\abs{S_Y}) /\binom{k}{\ell}} = \frac{q\sum_{i \in S_Z}  w_i^{-1}}{\ell}$, sample $y_j$ uniformly at random from $Y \setminus S_Y$ and add $y_j$ to $S_Y.$ Note that this action will always be performed if $q=0$.
    \item Maintain an array of cumulative sums $s_t := \sum_{h=1}^t w_{i_h}^{-1}$ for $t\in [\ell]$ where $w_{i_1} , w_{i_2} , \cdots , w_{i_{\ell}}$ are the weights corresponding to the edges in $S_Z$; this data structure supports amortized $O(\log n)$-time insertion and deletion from $S_Z$ and binary search in the sorted array $[s_t]_{t=1}^{\ell}$. This data structure can be implemented using a splay tree where each node stores the sum of all leaves in its rooted subtree. With probability $\tau/(1+\tau)$, sample $z_f$ from $S_Z$ with probability $\propto 1/w_f$ by: sample uniformly random $p\in [0,s_{\ell}]$, find minimum $t\in [\ell]$ where $p \leq s_t$, and remove $z_{i_t}$ from $S_Z.$ This removal will split a tree in the forest $S_Z$, and we update the link-cut tree representation of $S_Z$ accordingly in $O(\log n)$ time.  
    
    
\end{itemize}
See \cref{fig:up-down} for a visualization of how one up-then-down step may change the set $T_Z = Z \setminus T.$ 

For completeness, we briefly summarize how to handle sampling and removing an edge from $\text{cycle}_S$, which was described in \cite{RTF18}. We represent $S_Z$ as a forest of link-cut trees. When we add an edge $e = (u,v)$ that forms a cycle, splay $u$ to be the root of its tree $\mathcal{T}_u$, then access $v$ (which is also in $T_u$) so that the entire path $\mathcal{P}_{u,v}$ from $u$ to $v$ in $T_u$ is stored in one auxiliary tree. This auxiliary splay tree can be augmented to support (weighted) sampling an edge $f$ from $\mathcal{P}_{u,v}$ as described above with $S_Z.$ Remove $f$ (which disconnects $\mathcal{T}_u$ into two trees) then add $e$. Link-cut trees support these operations in amortized $O(\log n)$ time, and the augmentation increases the run-time by only a constant factor.
\end{proof}
We briefly discuss how to sample from the family of independent sets of a matroid $\matroid = ([n], \mathcal{I})$ of rank $k$ using the framework developed here.  
If we are given access to an oracle $\mathcal{O}'$ whose input-output behavior is as follows:
\begin{itemize}
    \item $\mathcal{O}'$ takes as input a set $S\subseteq [n]$ that is guaranteed to contain at most one circuit
    \item outputs a uniformly random element from the unique circuit in $S$ if it exists
\end{itemize}
then each step of the down-up walk described in Proof of \cref{thm:forests} can be implemented with one call to $\mathcal{O}',$ resulting in a $O(n \log n)$-time algorithm to sample uniformly from the family of independent sets of $\matroid.$ 
We remark that \cref{thm:indepPoly} and the polarization trick employed in Proof of \cref{thm:forests} already give a $O(k n \log\frac{k}{\epsilon})$-time algorithm, given access to the independent set oracle $\mathcal{O}_I$ for $\matroid.$ Indeed, the down-up walk on the distribution defined by the polarization of Lorentzian polynomial $g_{\matroid}(z_1, \cdots, z_n) = \sum_{S \in I(\matroid)} z_0^{k-\abs{S}} z^S $ (see \cref{thm:indepPoly}) mixes in $O(k \log\frac{k}{\epsilon})$ steps, and each step can be implemented using $O(n)$ calls to $\mathcal{O}_{I}$.

We sketch the proof of \cref{thm:constrainedForests}. W.l.o.g., we may assume $0\leq k_1\leq k_2 \leq \rank(\matroid)$ where $\matroid$ is the graphic matroid of $G$.
For any $\ell \leq \rank(\matroid) $, size-$\ell$ forests of $G$ form the basis of a matroid $\matroid^{(\ell)}.$ 
For an arbitrary matroid $\matroid$ and parameters $k_1 \in \N$, we show that $f_{\matroid}^{k_1} (z_0, z_1, \cdots, z_n)= \sum_{S \in \mathcal{I}(\matroid), \abs{S} \geq k_1} z_0^{\abs{S}} z^{[n]\setminus S}  $ is completely log-concave, and proceed as in \cref{lem:complementIndepPoly} and \cref{thm:forests}, while setting $\matroid$ to be $\matroid^{(k_2)}.$ 
\begin{lemma}
For any matroid $\matroid$ of rank $r$ over ground set $[n]$ and parameter $h \in \N$, the following polynomial is completely log-concave
\[f_{\matroid}^{h} (z_0, z_1, \cdots, z_n)= \sum_{S \in \mathcal{I}(\matroid), \abs{S} \geq h} z_0^{\abs{S}} z^{[n]\setminus S}   \]
\end{lemma}
\begin{proof}
Note that if $h> r$ then $f^h_{\matroid}\equiv 0$, and if $h=r$ then $f^h_{\matroid}$ is completely log-concave by apply \cref{thm:basePoly} for the dual matroid of $\matroid$. Below, assume $h < r.$

\cref{lem:complementIndepPoly} implies $f_{\matroid}^0$ is completely log-concave. Now,  $f_{\matroid}^{h} $ and $z_0^h \partial^h_0 f_{\matroid}^0 $ has the same support, and this support is $M$-convex since $z_0^h \partial^h_0 f_{\matroid}^0 $ is Lorentzian (see \cref{lem:clc}, \cref{prop:product}).  

We show $f_{\matroid}^{h} $ is Lorentzian by inducting on $n$ as in \cref{lem:complementIndepPoly}. We only need to check $\partial_0^{n-2} f_{\matroid}^h $ is Lorentzian. 
If $h\leq n-2$, then $\partial_0^{n-2} f_{\matroid}^h$ is exactly $\partial_0^{n-2} f_{\matroid}^0$, thus is Lorentzian because $f_{\matroid}^0$ is Lorentzian. The only remaining case is $h=n-1$ and $r=n$, then $\partial_0^{n-2} f_{\matroid}^h = (n-1)! z_0 \sum_{i=1}^n z_i + \frac{n!}{2} z_0^2 \propto z_0( \sum_{i=1}^n z_i + \frac{n}{2} z_0)$ is real-stable.
\end{proof}



	\printbibliography
\end{document}